\newcommand\eat[1]{}
\def\setX{{\mathcal{X}}}
\def\setZ{{\mathcal{Z}}}
\def\L{{\mathcal{L}}}
\def\R{{\mathcal{R}}}
\def\D{{\mathcal{D}}}
\def\I{{\mathcal{I}}}
\def\hx{y}
\def\hX{Y}
\def\sethX{\mathcal{Y}}
\def\Qyx{{Q_{Y|X}}}
\def\Qzx{{Q_{Z|X}}}
\newcommand\blfootnote[1]{%
  \begingroup
  \renewcommand\thefootnote{}\footnote{#1}%
  \addtocounter{footnote}{-1}%
  \endgroup
}
\newtheorem{theorem}{Theorem}[section]
\newtheorem{corollary}{Corollary}[theorem]
\newtheorem{lemma}[theorem]{Lemma}
\theoremstyle{remark}  %
\newtheorem*{remark}{Remark}
\def\eqref#1{equation~\ref{#1}}
\def\1{\bm{1}}
\DeclareMathAlphabet{\mathsfit}{\encodingdefault}{\sfdefault}{m}{sl}
\SetMathAlphabet{\mathsfit}{bold}{\encodingdefault}{\sfdefault}{bx}{n}
\newcommand{\E}{\mathbb{E}}
\title{Towards Empirical Sandwich Bounds on the Rate-Distortion Function}
\author{Yibo Yang, Stephan Mandt  \\
Department of Computer Science, UC Irvine\\
\texttt{\{yibo.yang,mandt\}@uci.edu} \\
}
\begin{document}

\maketitle

\begin{abstract}

Rate-distortion (R-D) function, a key quantity in information theory, characterizes the fundamental limit of how much a data source can be compressed subject
to a fidelity criterion, by \emph{any}  compression algorithm. 
As researchers push for ever-improving compression performance, establishing the R-D function of a given data source is not only of scientific interest, but also sheds light on the possible room for improving compression algorithms.
Previous work on this problem relied on distributional assumptions on the data source \citep{gibson2017rate} or only applied to discrete data. By contrast, this paper makes the first attempt at an algorithm for sandwiching the R-D function of a general (not necessarily discrete) source requiring only i.i.d. data samples. 
We estimate R-D sandwich bounds for a variety of artificial and real-world data sources, in settings far beyond the feasibility of any known method, and shed light on the optimality of neural data compression \citep{balle2021ntc, yang2022introduction}.
Our R-D upper bound on natural images indicates theoretical room for improving state-of-the-art image compression methods by at least one dB in PSNR at various bitrates. Our data and code can be found \href{https://github.com/mandt-lab/empirical-RD-sandwich}{here}.

\end{abstract}

\ifthenelse{\boolean{preprint}}
{
\blfootnote{Preliminary work.}
}
{
}

\section{Introduction}
From storing astronomical images captured by the Hubble telescope, to delivering familiar faces and voices over video calls, data compression, i.e., communication of the ``same'' information but with less bits, is commonplace and indispensable to our digital life, and even arguably lies at the heart of intelligence \citep{mahoney2009rationale}. 
While for lossless compression, there exist practical algorithms that can compress any discrete data arbitrarily close to the information theory limit \citep{ziv1977universal, witten1987arithmetic}, no such universal algorithm has been found for \emph{lossy} data compression \citep{berger1998lossy}, and significant research efforts have dedicated to lossy compression algorithms for various data.
Recently, deep learning has shown promise for learning lossy compressors from raw data examples, 
with continually improving compression performance often matching or exceeding traditionally engineered methods \citep{minnen2018joint,agustsson2020scale, yang2020hierarchical}.

However, there are fundamental limits to the performance of any lossy compression algorithm, due to the inevitable trade-off between \emph{rate}, the average number of bits needed to represent the data, and the \emph{distortion} incurred by lossy representations. 
This trade-off is formally described by the rate-distortion (R-D) function, for a given \emph{source} (i.e., the data distribution of interest; referred to as such in information theory) and distortion metric.
The R-D function characterizes the best theoretically achievable rate-distortion performance by any compression algorithm, which can be seen as a lossy-compression counterpart and generalization of Shannon entropy in lossless compression.

Despite its fundamental importance, the R-D function is generally unknown analytically, 
and establishing it for general data sources, especially real world data,
is a difficult problem \citep{gibson2017rate}.
The default method for computing R-D functions, the Blahut-Arimoto  algorithm \citep{blahut1972computation, arimoto1972algorithm}, only works for discrete data with a known probability mass function and has a complexity exponential in the data dimensionality. Applying it to an unknown data source requires discretization (if it is continuous) and estimating the source probabilities by a histogram, both of which introduce errors and are computationally infeasible beyond a couple of dimensions. 
Previous work characterizing the R-D function of images and videos \citep{hayes1970rate, gibson2017rate} 
all assumed a statistical model of the source,
making the results dependent on the modeling assumptions.

In this work, we make progress on this old problem in information theory using tools from machine learning, and introduce new algorithms for upper and lower bounding the R-D function of a \emph{general} (i.e., discrete, continuous, or neither), \emph{unknown} memoryless source. More specifically,
\begin{enumerate}
    \item 
    Similarly to how a VAE with a discrete likelihood model minimizes an upper bound on the data entropy, we establish that \emph{any $\beta$-VAE with a likelihood model induced by a distortion metric minimizes an upper bound on the data rate-distortion function}. We thus open the deep generative modeling toolbox to the estimation of an upper bound on the R-D function.

    \item We derive a lower bound estimator of the R-D function that can be made asymptotically exact and optimized by stochastic gradient ascent.
    Facing the difficulty of the problem involving global optimization, we restrict to a squared error distortion for a practical implementation. %
    \item We perform extensive experiments and obtain non-trivial sandwich bounds on various data sources, including GAN-generated artificial sources and real-world data from speech and physics. Our results shed light on the effectiveness of neural compression approaches \citep{balle2021ntc, minnen2018joint, minnen2020channel}, and identify the \emph{intrinsic} (rather than nominal) dimension of data as a key factor affecting the tightness of our lower bound.
    \item Our estimated R-D upper bounds on high-resolution natural images (evaluated on the standard Kodak and Tecnick datasets) indicate theoretical room for improvement of state-of-the-art image compression methods by at least one dB in PSNR, at various bitrates.
 
\end{enumerate}

We begin by reviewing the prerequisite rate-distortion theory in Section~\ref{sec:background}, then describe our upper and lower bound algorithms in Section~\ref{sec:ub-method} and Section~\ref{sec:lb-method}, respectively. We discuss related work in Section~\ref{sec:related-work}, report experimental results in Section~\ref{sec:experiments}, and conclude in Section~\ref{sec:discussions}.

\section{Background}\label{sec:background}
Rate-distortion (R-D) theory deals with the fundamental trade-off between the average number of bits per sample (\emph{rate}) used to represent a data source $X$
and the \emph{distortion} incurred by the lossy representation $\hX$. 
It asks the following question about the limit of lossy compression:  
for a given data source and a distortion metric (a.k.a., a fidelity criterion), what is the minimum number of bits (per sample) needed to represent the source at a tolerable level of distortion, regardless of the computation complexity of the compression procedure? 
The answer is given by the  %
rate-distortion function $R(D)$. 
To introduce it, let the source and its reproduction take values in the sets $\setX$ and $\sethX$, conventionally called the \emph{source} and \emph{reproduction alphabets}, respectively. We define the data source formally by a random variable $X \in \setX$ following a (usually unknown) distribution $P_X$, 
and assume a distortion metric $\rho: \setX \times \sethX \to [0, \infty) $ has been given, such as the squared error $\rho(x, \hx)=\|x-\hx \|^2$. %
The rate-distortion function is then defined by the following constrained optimization problem,
\begin{align}
R(D) = \inf_{\Qyx: \ \E [\rho(X, \hX) ]\leq D} I(X; \hX), \label{eq:RD-primal-cvx-definition}
\end{align}
where we consider all random transforms $\Qyx$ whose expected distortion is within the given threshold $D \geq 0$, and minimize the mutual information between the source $X$ and its reproduced representation $\hX$ \footnote{
    Both the expected distortion and mutual information terms are defined w.r.t. the joint distribution $P_X \Qyx$. We formally describe the general setting of the paper, including the technical definitions, in Appendix~\ref{app-sec:prereqs}. }.
Shannon's lossy source coding theorems \citep{Shannon1948, shannon1959coding} gave operational significance to the above mathematical definition of $R(D)$, as the minimum achievable rate with which any lossy compression algorithm can code i.i.d. data samples at a distortion level within $D$. 

The R-D function thus gives the tightest lower bound on the rate-distortion performance of any lossy compression algorithm, and can inform the design and analysis of such algorithms. If the operational  distortion-rate performance of an algorithm lies high above the source $R(D)$-curve $(D, R(D))$, then further performance improvement may be expected; otherwise, its  performance is already close to theoretically optimal, and we may focus our attention on  other aspects of the algorithm. 
As the R-D function does not have an analytical form in general, we propose to estimate it from data samples, making the standard assumption that various expectations w.r.t. the true data distribution $P_X$ exist and can be approximated  %
by sample averages. When the source alphabet is finite, this assumption automatically holds, and $R(D)$ also provides a lower bound on the Shannon entropy of discrete data.

\section{Upper Bound Algorithm}\label{sec:ub-method}

R-D theory \citep{cover2006elements} tells us that every (distortion, rate) pair lying above the $R(D)$-curve is in theory realizable by a (possibly expensive) compression algorithm. An upper bound on $R(D)$ thus reveals what kind of compression performance is theoretically achievable. 
Towards this goal, we borrow the variational principle of the Blahut-Arimoto (BA) algorithm, but extend it to a general (e.g., non-discrete) source requiring only its samples. Our resulting algorithm optimizes a $\beta$-VAE whose likelihood model is specified by the distortion metric, a common case being a Gaussian likelihood with a fixed variance. For the first time, we establish this class of models as computing a model-agnostic upper bound on the source R-D function, as defined by a data compression task.

\paragraph{Variational Formulation.} \label{sec:ub-variational-formulation}
Following the BA algorithm \citep{blahut1972computation, arimoto1972algorithm}, we consider a Lagrangian relaxation of the constrained problem defining $R(D)$, which has the variational objective
\begin{align}
\L(\Qyx, Q_{\hX}, \lambda) := \E_{x \sim P_X}[ KL(Q_{\hX|X=x} \|Q_{\hX})] + \lambda \E_{P_X \Qyx}[\rho(X, \hX)], \label{eq:RD-variational-lagrangian}
\end{align}
where $Q_{\hX}$ is a new, arbitrary probability measure on $\sethX$. 
The first (\emph{rate}) term is a variational upper bound on the mutual information $I(X; \hX)$, and the second (\emph{distortion}) term enforces the distortion tolerance constraint in
Eq.~\ref{eq:RD-primal-cvx-definition}. 
For each fixed $\lambda > 0$, the BA algorithm globally minimizes $\L$ w.r.t. the variational distributions $\Qyx$ and $Q_{\hX}$ by coordinate descent; at convergence, the (distortion, rate) pair yields a point on the $R(D)$ curve \citep{csiszar1974computation}. 
Unfortunately, the BA algorithm only applies when $\setX$ and $\sethX$ are finite (hence discrete), and the source distribution known. %
Otherwise, a preprocessing step is required to discretize a continuous source and/or estimate source probabilities by a histogram, which introduces a non-negligible bias. 
This bias, along with its exponential complexity in the data dimension, also makes BA infeasible beyond a couple of (usually 2 or 3) dimensions.

\paragraph{Proposed Method.}
To avoid these difficulties, we propose to apply (stochastic) gradient descent on $\L$ w.r.t. flexibly parameterized variational distributions $\Qyx$ and $Q_{\hX}$. 
In this work we parameterize the distributions by 
neural networks,
and predict the parameters of each $Q_{\hX|X=x}$ by an \emph{encoder} network $\phi(x)$ as in amortized inference \citep{kingma2014vae}. %
Given data samples, the estimates of rate and distortion terms of $\L$ yield a point that in expectation lies on an R-D upper bound $R_U(D)$, and we tighten this bound by optimizing $\L$; repeating this procedure for various $\lambda$ traces out $R_U(D)$.

\label{sec:ub-via-beta-vae}
The objective $\L$ closely resembles 
the negative ELBO (NELBO) objective of a $\beta$-VAE \citep{higgins2017beta} if we view the reproduction space $\sethX$ as the ``latent space''. The connection is immediate when $\setX$ is continuous and a squared error $\rho$ specifies the density of a Gaussian likelihood $ p(x|\hx) \propto \exp (- \|x - \hx|^2 )$.
However, unlike in data compression, where  $\sethX$ is determined by the application (and often equal to $\setX$ for a full-reference distortion), the latent space in a ($\beta$-)VAE typically has a lower dimension than $\setX$, and a \emph{decoder} network is  used to parameterize a likelihood model in the data space.
To capture this setup, we introduce a new, arbitrary latent space $\setZ$  on which we define variational distributions $Q_{Z|X}, Q_Z$, and a (possibly stochastic) decoder function $\omega: \setZ \to \sethX$.
This results in an extended objective, %
resembling a $\beta$-VAE with a likelihood density $p(x|z) \propto \exp\{ - \rho(x, \omega(z)\} $,
\begin{align}
J(\Qzx, Q_Z, \omega, \lambda) := \E_{x \sim P_X}[ KL(Q_{Z|X=x} \|Q_Z)] + \lambda \E_{P_X \Qzx}[\rho(X, \omega(Z))]. \label{eq:RD-betavae-variational-lagrangian}
\end{align}
This objective is closely related to the original data compression task and provides an upper bound on the source $R(D)$, as follows.
Treating $\setZ$ as the reproduction alphabet, we can define a new distortion $\rho_\omega (x,z): = \rho(x, \omega(z))$, and a $\omega$-induced R-D function, $R_\omega(D) := \inf_{Q_{Z|X}: \E[\rho_\omega(X, Z) ]\leq D} I(X; Z)$, for each  choice of $\omega$. %
Our Theorem~\ref{thm:beta-VAE-RD} then guarantees that $R_\omega(D) \geq R(D)$, for any $\omega$, and consequently the (distortion, rate) of $J$ always lies above $R(D)$. Moreover, $R_\omega(D) = R(D)$ for a bijective $\omega$, which offers some theoretical support for the use of invertible pixel-shuffle operations instead of upsampled convolutions in the decoder of image compression autoencoders \citep{theis2017cae, cheng2020learned}.
We can now minimize the NELBO-like objective $J$ w.r.t. the parameters of $(\Qzx, Q_Z, \omega)$ similar to training a $\beta$-VAE, knowing that we are optimizing an upper bound on the rate-distortion function of the  source. This can be seen as the lossy counterpart to the lossless setting, where it is well-established that minimizing the NELBO minimizes an upper bound on the Shannon entropy of the source \citep{frey1997efficient, mackay2003information}, the limit of lossless data compression.

The extended objective offers the freedom to define variational distributions on any suitable latent space $\setZ$, rather than $\sethX$, which we found to simplify the modeling task and yield tighter bounds.
E.g., even if $\sethX$ is high-dimensional and discrete, we can still work with densities on a continuous and lower-dimensional $\setZ$ and draw upon tools such as normalizing flows \citep{kobyzev2021normalizing}.
We can also treat $Z$ as the concatenation of sub-vectors $[Z_1, Z_2, ...,Z_L]$, and parameterize $Q_Z$ in terms of %
simpler component distributions $Q_Z = \prod_{l=1}^L Q_{Z_l| Z_{<l}}$ (similarly for $\Qzx$) as in a hierarchical VAE. %

\section{Lower Bound Algorithm}\label{sec:lb-method}

Without knowing the tightness of an R-D upper bound, we could be 
wasting time and resources trying 
to improve the R-D performance of a compression algorithm, %
when it is in fact already close to the theoretical limit.
This would be avoided if we could find a matching \emph{lower} bound on $R(D)$. 
Unfortunately,  %
the problem turns out to be much more difficult computationally.
Indeed, every compression algorithm, or every pair of variational distributions $(Q_{\hX}, \Qyx)$ %
yields a point above $R(D)$.
Conversely, establishing a lower bound requires disproving the existence of \emph{any} compression algorithm that can conceivably %
operate below the $R(D)$ curve. 
In this section, we derive an algorithm that can in theory produce arbitrarily tight R-D lower bounds. 
However, as an indication of its difficulty, the problem requires \emph{globally} maximizing a family of partition functions. 
By restricting to a continuous reproduction alphabet and a squared error distortion, we make some progress on this problem and obtain useful lower bounds especially on data with low intrinsic dimension (see Sec.~\ref{sec:experiments}).

\paragraph{Dual characterization of $R(D)$.}
\begin{wrapfigure}{r}{0.4\textwidth}
\vspace{-3mm}
  \begin{center}
  \includegraphics[width=1.0\linewidth]{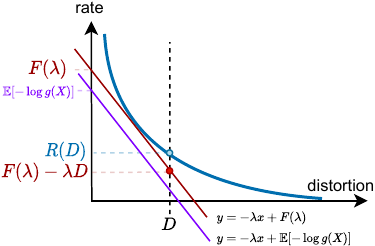}
  \caption{The geometry of the R-D lower bound problem. For a given slope $-\lambda$, we seek to maximize the $rate$-axis intercept, $\E[-\log g(X)]$, over all $g\geq 0$ functions admissible according to Eq.~\ref{eq:lb-master-constraint}. }
  \label{fig:R-D-duality}
\end{center}
\vspace{-8mm}
\end{wrapfigure}

While upper bounds on $R(D)$ arise naturally out of its definition as a minimization problem, a variational lower bound requires expressing $R(D)$ through a \emph{maximization} problem. For this, we introduce a ``dual'' function as the optimum of the Lagrangian Eq.~\ref{eq:RD-variational-lagrangian} ($Q_{\hX}$ is eliminated by replacing the rate upper bound with the exact mutual information $I(X; \hX)$):
\begin{align}
    F(\lambda) %
    := \inf_{ \Qyx} I(X; \hX) + \lambda \E[\rho(X, \hX)]. 
\end{align}

As illustrated by Fig.~\ref{fig:R-D-duality}, $F(\lambda)$ is the maximum $R$-axis intercept of a straight line with slope $-\lambda$, among all such lines that lie below or tangent to $R(D)$; the R-D curve can then be found by taking the upper envelope of lines with slope $-\lambda$ and $R$-intercept $F(\lambda)$, i.e., $R(D) = \max_{\lambda \geq 0} F(\lambda) - \lambda D$.
This key result is captured mathematically by Lemma~\ref{lemma:RD-lagrangian-characterization}, and the following theorem.  
\begin{theorem} \label{thm:dual-RD-formulation}
\citep{csiszar1974} \emph{(As follows, all the expectations are with respect to the data source r.v. $X \sim P_X$.)}
Under basic conditions (e.g., satisfied by a bounded $\rho$; see Appendix \ref{thm:full-dual-RD-formulation}),
it holds that
\begin{align}
    F(\lambda) = \max_{g(x)} \{ \E[ - \log g(X)]  \}, \label{eq:lb-master-maximization-problem}
\end{align}

where the maximization is over all non-negative functions $g: \setX \to [0, \infty) $ 
satisfying the constraint
\begin{align}
    \E\left[ \frac{\exp(-\lambda \rho(X, \hx))} {g(X)} \right] = \int \frac{\exp(-\lambda \rho(x, \hx))} {g(x)} d P_X(x) \leq 1, \forall \hx \in \sethX. \label{eq:lb-master-constraint}
\end{align}
\end{theorem}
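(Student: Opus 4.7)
I would prove the two inequalities separately: first, the ``weak duality'' direction $F(\lambda)\ge \sup_g \E[-\log g(X)]$, and then the ``strong duality'' direction by explicitly exhibiting an admissible $g^*$ that attains the supremum. The engine for the first direction is the Donsker--Varadhan variational representation of KL divergence; the engine for the second is the Blahut--Arimoto fixed-point characterization of the optimal conditional in the definition of $F(\lambda)$.

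\textbf{Weak duality.} Fix any $g$ satisfying the constraint in Eq.~\ref{eq:lb-master-constraint} and any $Q_{\hat X|X}$, letting $P_{X|\hat X}$ denote the induced posterior. For each $\hat x \in \sethX$, I would apply the Donsker--Varadhan-type inequality
\[
\KL(P_{X|\hat X=\hat x}\,\|\,P_X) \;\ge\; \E[\log\phi(X)\mid \hat X=\hat x] \;-\; \log \E_{P_X}[\phi(X)]
\]
with the test function $\phi(x) := \exp(-\lambda\rho(x,\hat x))/g(x)$. The admissibility constraint forces $\log\E_{P_X}[\phi(X)]\le 0$, so
\[
\KL(P_{X|\hat X=\hat x}\,\|\,P_X) \;\ge\; -\lambda\,\E[\rho(X,\hat x)\mid \hat X=\hat x] \;-\; \E[\log g(X)\mid \hat X=\hat x].
\]
Integrating over $\hat X$ turns the left-hand side into $I(X;\hat X)$ and yields $I(X;\hat X) + \lambda\,\E[\rho(X,\hat X)] \ge \E[-\log g(X)]$. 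Taking the infimum over $Q_{\hat X|X}$ gives $F(\lambda)\ge \E[-\log g(X)]$ for every admissible $g$, hence $F(\lambda)\ge \sup_g \E[-\log g(X)]$.

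\textbf{Achievability.} Under the basic conditions, a minimizer $Q^*_{\hat X|X}$ in the definition of $F(\lambda)$ exists, and its induced marginal $Q^*_{\hat X}$ satisfies the Blahut--Arimoto fixed-point relation
\[
\frac{dQ^*_{\hat X|X=x}}{dQ^*_{\hat X}}(\hat x) \;=\; \frac{\exp(-\lambda\rho(x,\hat x))}{g^*(x)}, \qquad g^*(x) \;:=\; \int \exp(-\lambda\rho(x,\hat x))\, dQ^*_{\hat X}(\hat x).
\]
I would then verify two claims. First, \emph{admissibility of $g^*$}: for $\hat x$ in the support of $Q^*_{\hat X}$, consistency of the joint marginals ($\int dQ^*_{\hat X|X=x}/dQ^*_{\hat X}(\hat x)\, dP_X(x) = 1$) gives $\E_{P_X}[\exp(-\lambda\rho(X,\hat x))/g^*(X)] = 1$; for $\hat x$ outside the support, the KKT/subgradient condition for optimality over the cone of nonnegative measures on $\sethX$ yields $\le 1$. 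Second, \emph{equality}: substituting the optimal Radon--Nikodym derivative into $I(X;\hat X) + \lambda\,\E[\rho(X,\hat X)]$ produces $\E[-\lambda\rho(X,\hat X) - \log g^*(X)] + \lambda\,\E[\rho(X,\hat X)] = \E[-\log g^*(X)]$, which matches the upper bound from weak duality and certifies that the supremum is attained at $g^*$.

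\textbf{Main obstacle.} The delicate step is establishing the constraint for $\hat x$ outside the support of $Q^*_{\hat X}$. This is precisely a first-order optimality condition expressing that putting any small mass at such an $\hat x$ cannot strictly decrease the Lagrangian; it is the genuinely variational content of the theorem and is not a mere consequence of marginal consistency. Turning this informal KKT argument into a rigorous statement, together with ensuring the existence of a minimizing $Q^*_{\hat X|X}$ and justifying the interchange of integrals implicit in the Donsker--Varadhan step, is precisely what the ``basic conditions'' (e.g., bounded $\rho$, existence of relevant moments, appropriate measurability, and lower semicontinuity of the objective) in the hypothesis are designed to guarantee.
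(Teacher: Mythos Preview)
The paper does not supply its own proof of this theorem: it is quoted from \citet{csiszar1974} (with the full version stated as Theorem~\ref{thm:full-dual-RD-formulation}, attributed to \citet{kostina2016shannon}), and the only remark about the argument is that in the finite-alphabet case it is Lagrange duality for the convex program defining $R(D)$, while ``a rigorous proof \dots\ for general alphabets is more involved, and is based on repeated applications of information divergence inequalities.'' So there is no paper-internal proof to compare against line by line; what I can do is compare your outline to that one-sentence description and to the classical argument it points to.

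Your plan is precisely the classical Csisz\'ar route. The weak-duality direction via the Donsker--Varadhan bound on $\KL(P_{X|\hat X=\hat x}\|P_X)$ is exactly an ``information divergence inequality'' in the sense the paper alludes to, and the computation you give is correct: the constraint kills the $\log\E_{P_X}[\phi(X)]$ term, and averaging over $\hat X$ reconstitutes $I(X;\hat X)$ on the left. For achievability, the exponential-tilt (Blahut--Arimoto) form of the optimizer and the resulting $g^*(x)=\int e^{-\lambda\rho(x,\hat x)}\,dQ^*_{\hat X}(\hat x)$ is the standard candidate, and your verification that it yields equality in the Lagrangian is correct. You have also correctly isolated the only genuinely nontrivial step: showing the constraint holds with $\le 1$ for $\hat x$ \emph{off} the support of $Q^*_{\hat X}$. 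In Csisz\'ar's treatment this is obtained not as a formal KKT condition but again via a divergence inequality (comparing the Lagrangian at $Q^*_{\hat X}$ to a perturbation that places mass at the candidate $\hat x$); your KKT framing is the same idea in optimization language. The existence of the minimizer $Q^*_{\hat X|X}$ is where the ``basic conditions'' do real work, and you are right to flag it rather than assume it away.

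In short: your proposal is correct and is the same approach the paper points to; the paper simply defers the details to the cited references rather than reproducing them.
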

In other words, every admissible $g$ yields a lower bound of $R(D)$, via an underestimator of the intercept $\E[-\log g(X)] \leq F(\lambda)$.
We give the origin and history of this result in related work Sec.~\ref{sec:related-work}.

\paragraph{Proposed Unconstrained Formulation.}
The constraint in Eq.~\ref{eq:lb-master-constraint} is concerning---it is a family of possibly infinitely many constraints, one for each $\hx$. %
To make the problem easier to work with, we propose to eliminate the constraints by the following transformation.
Let $g$ be defined in terms of another function $u(x) \geq 0$ and a scalar $c$ depending on $u$,  such that 
\begin{align}
    g(x) &:= c u(x),   \quad\text{where} \enskip c: = \sup_{\hx \in \sethX} \Psi_u(\hx),  \quad\text{and} \enskip
    \Psi_u(\hx) := \mathbb{E}\left[ \frac{\exp{-\lambda \rho(X, \hx)}} {u(X)}\right]. \label{eq:lb-unconstrained-formulation-and-sup-partition-fun}
\end{align}
This reparameterization of $g$ is without loss of generality, and can be shown to always satisfy the constraint in Eq.~\ref{eq:lb-master-constraint}. While this form of $g$ bears a superficial resemblance to an energy-based model \citep{lecun2006tutorial}, with $\frac{1}{c}$ resembling a normalizing constant, there is an important difference:  $c= \sup_{\hx} \Psi_u(\hx)$ is in fact the supremum of a family of ``partition functions'' $\Psi_u(\hx)$ indexed by $\hx$; we thus refer to $c$ as the \emph{sup-partition function}. 
Although all these quantities have $\lambda$-dependence, we omit this from our notation since $\lambda$ is a fixed input parameter (as in the upper bound algorithm).

Consequently, $F$ is now the result of \emph{unconstrained} maximization over all $u$ functions, and we obtain a lower bound on it by restricting $u$ to a subset of functions with parameters $\theta$ (e.g., neural networks),
\[
F(\lambda) = \max_{u \geq 0} \{\mathbb{E}[- \log u(X)]  -  \log \sup_{\hx\in \sethX} \Psi_u(\hx)\} \geq\max_{\theta}  \{\mathbb{E}[- \log u_\theta(X)] -  \log \sup_{\hx \in \sethX} \Psi_{\theta}(\hx) \}
\]
Define the $\theta$-parameterized objective $\ell(\theta) := \mathbb{E}[- \log u_\theta(X)] - \log c(\theta)$, with $c(\theta) = \sup_{\hx \in \sethX} \Psi_{\theta}(\hx)$. Given samples of $X$, we can in principle maximize $\ell(\theta)$ by (stochastic) gradient ascent. 
However, computing the sup-partition function $c(\theta)$ poses serious computation challenges: 
even evaluating $\Psi_{\theta}(\hx)$ for a single $\hx$ value involves a potentially high-dimensional integral w.r.t. $P_X$; this is only exacerbated by the need to globally optimize w.r.t. $\hx$, an NP-hard problem even in one-dimension. 

\paragraph{Proposed Method.}
To tackle this problem, we propose an over-estimator of the sup-partition function inspired by IWAE \citep{burda2015importance}. 
Fix $\theta$ for now; we denote the integrand in Eq.~\ref{eq:lb-unconstrained-formulation-and-sup-partition-fun} by $\psi(x, \hx) := \frac{\exp{-\lambda \rho(x, \hx)}} {u(x)}$ (so $c = \sup_{\hx \in \sethX} \mathbb{E}[ \psi(X, \hx)]$), and omit the dependence on $\theta$ to simplify notation.
Given $k\geq 1$ i.i.d. random variables $X_1, ...,  X_k {\sim} P_X$,   define the estimator  $C_k:=\sup_{\hx} \frac{1}{k} \sum_i \psi(X_i, \hx)$.
We prove in Theorem \ref{thm:sup_estimator} that $\E[C_1] \geq \E[C_2] \geq ... \geq c$, i.e., $C_k$ is in expectation an over-estimator of the sup-partition function $c$. Similarly to the Importance-Weighted ELBO \citep{burda2015importance}, the bias of this estimator decreases monotonically as $k \to \infty$, and asymptotically vanishes under regularity assumptions. 
In light of this, we replace $c$ by $\E[C_k]$ and obtain a $k$-sample under-estimator of the objective $\ell(\theta)$ %
(which in turn underestimates $F(\lambda)$):
\begin{align*}
    \ell_k (\theta) := \mathbb{E}[- \log u_\theta(X)] -  \log \mathbb{E}[C_k];  \quad \text{ moreover, } \ell_1(\theta) \leq \ell_2(\theta) \leq ... \leq \ell(\theta). 
\end{align*}
In order to apply stochastic gradient ascent, we overcome two more technical hurdles. 
First, each draw of $C_k$ requires solving a global maximization problem. We note that by restricting to a squared-error $\rho$ and $\sethX = \setX$, $C_k$ can be computed by finding the mode of a Gaussian mixture density; for this we use the method of  \citet{carreira2000mode}, essentially by hill-climbing from each of the $k$ centroids.  Second, to turn $-\log \mathbb{E}[C_k]$ into an expectation, we follow \citet{poole19variational} and underestimate it by linearizing $-\log$ around a scalar parameter $\alpha > 0$, resulting in the following lower bound objective:
\begin{align}
    \tilde \ell_k (\theta) := \mathbb{E}[- \log u_\theta(X)] - \mathbb{E}[C_k] / \alpha - \log\alpha + 1. \label{eq:lb_objective}
\end{align}
$\tilde \ell_k (\theta)$ can finally be estimated by sample averages, and yields a lower bound on the optimal intercept $F(\lambda)$, via $ \tilde \ell_k (\theta) \leq \ell_k(\theta) \leq \ell(\theta) \leq F(\lambda)$. A trained model $u_{\theta^*}$ then yields an R-D lower bound, $R_L(D) = -\lambda D + \tilde \ell_k(\theta^*)$. We give a more detailed derivation and pseudocode in Appendix \ref{app-sec:algo}.

\section{Related Work}\label{sec:related-work}

\paragraph{Machine Learning:}
The past few years have seen significant progress in applying machine learning to lossy data compression. 
\citet{theis2017cae, balle2017end} first showed
that a particular type of $\beta$-VAE can be trained to perform data compression using the same objective as Eq.~\ref{eq:RD-betavae-variational-lagrangian}.
The variational distributions in such a model
have shape restrictions to simulate quantization and entropy coding \citep{balle2017end}. Our upper bound is directly inspired by this line of work, and suggests that such a model can in principle compute the source R-D function when equipped with sufficiently expressive variational distributions and a ``rich enough'' decoder (see  Sec.~\ref{sec:ub-via-beta-vae}). We note however not all compressive autoencoders admit a probabilistic formulation \citep{theis2017cae}; recent work has found training with hard quantization to improve compression performance \citep{minnen2020channel},
and methods have been developed \citep{agustsson2020uq, yang2020improving} to reduce the gap between approximate quantization at training time and hard quantization at test time.
Departing from compressive autoencoders, \citet{yang2020variational} and \citet{flamich2020cwq} use Gaussian $\beta$-VAEs for data compression and exploit the flexibility of variable-width Gaussian posteriors. 
\citet{flamich2020cwq}'s method, and more generally, \emph{reverse channel coding} \citep{theis2021algorithms}, can transmit a sample of $\Qzx$ with a rate close to that optimized by our upper bound model in Eq.~\ref{eq:RD-betavae-variational-lagrangian}, more precisely, $I(X; Z) + \log(I(X; Z)+1) + O(1)$. 
i.e., in this one-shot setting (which is standard for neural image compression), $R(D)$ is no longer achievable; rather, the achievable R-D performance is characterized by $R(D) + \log(R(D)+1) + O(1)$.
Therefore,  our R-D bounds can be shifted upwards by this logarithmic factor to give an estimate of the achievable R-D performance in this setting.

Information theory has also broadly influenced
unsupervised learning %
and representation learning. %
The Information Bottleneck method \citep{tishby2000information} was directly motivated by, and extends R-D theory and the BA algorithm. %
\citet{alemi2018fixing} analyzed the relation between generative modeling and representation learning 
with a similar R-D Lagrangian to Eq.~\ref{eq:RD-variational-lagrangian}, but used an abstract, model-dependent distortion $\rho(\hx, x) := - \log p(x|\hx)$ with an arbitrary $\sethX$ and without considering a data compression task. Recently, \citet{huang2020rd} proposed to evaluate decoder-based generative models by computing a restricted version of $R_\omega(D)$ (with $Q_{\hX}$ fixed); %
our result in  Sec.~\ref{sec:ub-via-beta-vae}  ($R_\omega(D) \geq R(D)$)
gives a principled way to interpret and compare these model-dependent R-D curves.

\paragraph{Information Theory:}
While the BA algorithm \citep{blahut1972computation, arimoto1972algorithm} computes the $R(D)$ of a discrete source with a known distribution, no tool currently exists for the 
general and unknown case.
\citet{riegler2018ratedistortion} share our goal of computing $R(D)$ of a general source, but still require the source to be known analytically and supported on a known reference measure.
\citet{harrison2008estimation} consider the same setup as ours of estimating $R(D)$ of an unknown source from samples, but focus on purely theoretical aspects, assuming prefect optimization. They prove statistical consistency of such estimators for a general class of alphabets and distortion metrics, assuring that our stochastic bounds on $R(D)$ optimized from data samples, when given unlimited computation and samples, can converge to the true $R(D)$. Perhaps closest in spirit to our work is by \citet{gibson2017rate}, who estimate lower bounds on $R(D)$ of speech and video using Gaussian autoregressive models of the source. %
However, the correctness of their bounds depends on the modeling assumptions.

A variational lower bound on $R(D)$ was already proposed by \citet{shannon1959coding}, %
and later extended  \citep{berger1971rate} to the present version similar to Theorems \ref{thm:dual-RD-formulation} and \ref{thm:full-dual-RD-formulation}. 
In the finite-alphabet case, the maximization characterization of $R(D)$ directly follows from taking the Lagrange dual of its
standard definition in Eq.~\ref{eq:RD-primal-cvx-definition}
(which is itself a convex optimization problem); the dual problem can then be solved by convex optimization \citep{chiang2004geometric}, but faces the same limitations as the BA algorithm.  
A rigorous proof of Thereom \ref{thm:dual-RD-formulation} for general alphabets is more involved, and is based on repeated applications of information divergence inequalities \citep{csiszar1974, gray2011entropy}. %

\section{Experiments}\label{sec:experiments}

We estimate the R-D functions of 
a variety of artificial and real-world data sources, in settings where the BA algorithm is infeasible and no prior known method has been applied.
On \textbf{Gaussian sources}, our upper bound algorithm is shown to converge to the \emph{exact} R-D function, while our lower bounds become increasingly loose in higher dimensions, an issue we investigate subsequently. 
We obtain tighter sandwich bounds on \textbf{particle physics} and \textbf{speech} data than on similar dimensional Gaussians, and compare with the performance of neural compression. %
We further investigate looseness in the lower bound, experimentally establishing the \emph{intrinsic dimension} of the data as a much more critical contributing factor than the nominal/ambient dimension. Indeed, we obtain tight sandwich bounds on \textbf{high-dimension GAN-generated images} with a low intrinsic dimension, and compare with popular neural image compression methods.
Finally, we estimate bounds on the R-D function of \textbf{natural images}. The intrinsic dimension is likely too high for our lower bound to be useful, while our upper bounds on the Kodak and Tecnick datasets imply at least one dB (in PSNR) of theoretical room for improving state-of-the-art image compression methods, at various bitrates. 
We also validate our bounds against the BA algorithm over the various data sources, using a 2D marginal of the source to make BA feasible.
We provide experimental details and additional results in Appendix \ref{app:additional-experiments} and \ref{app:variance-info}.

\begin{figure}[t]
\centering
\begin{subfigure}{.255\textwidth}
  \centering
    \includegraphics[width=1.0\linewidth]{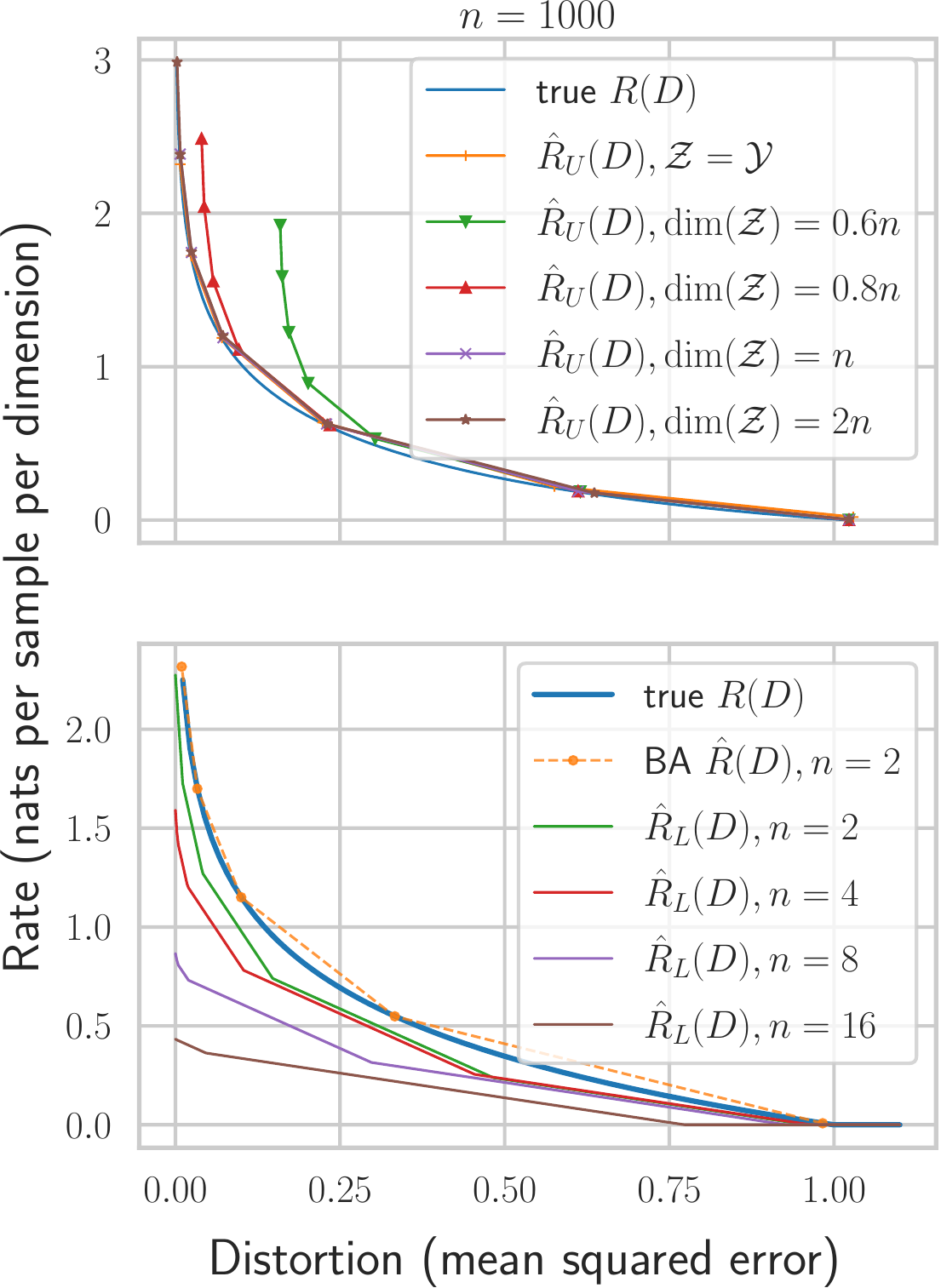}
    \caption{Gaussian} %
  \label{fig:rd-gaussian-main}
\end{subfigure}
\hfill
\begin{subfigure}{.255\textwidth}
  \centering
  \includegraphics[width=1.0\linewidth]{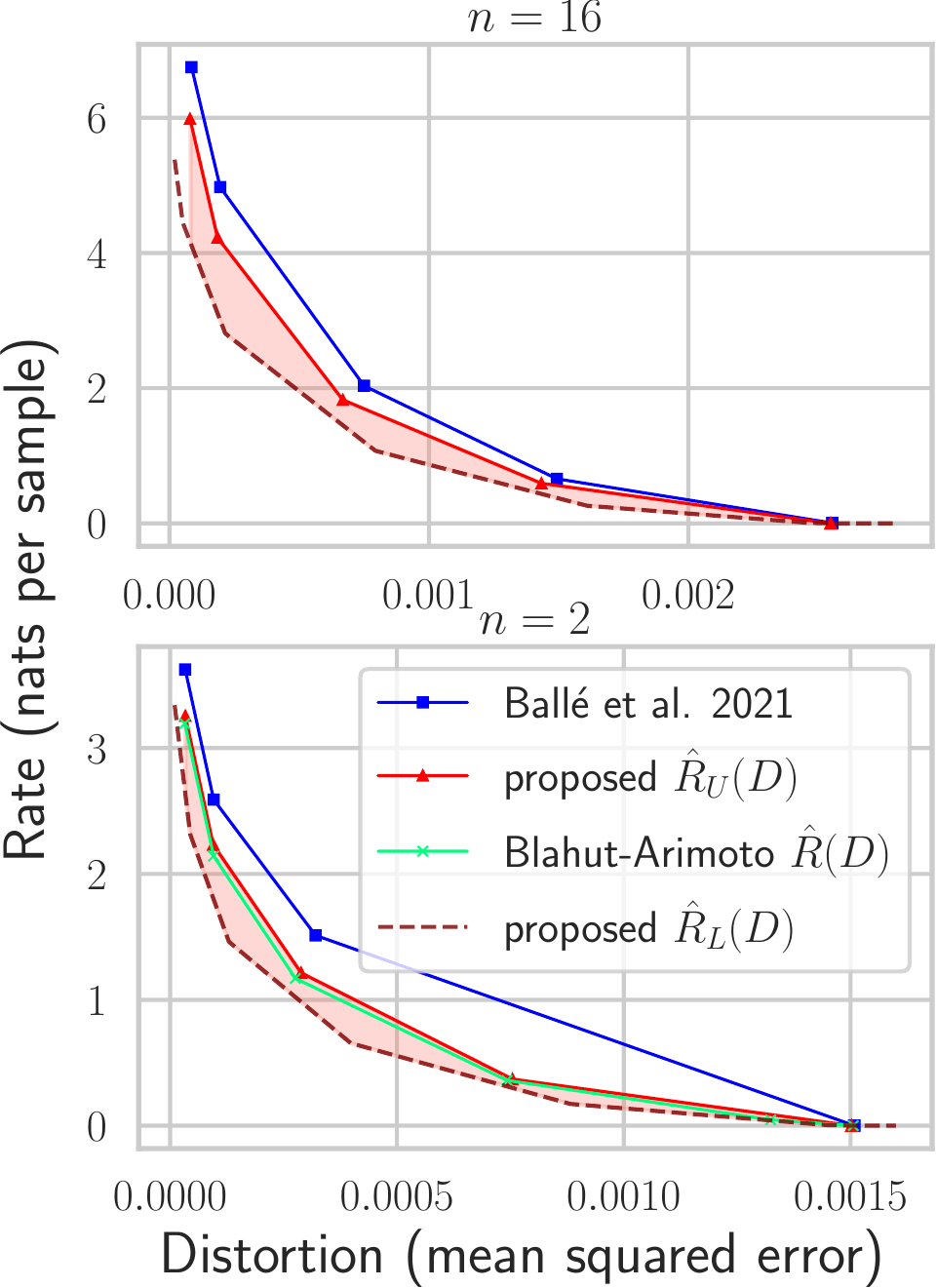}
  \caption{particle physics} 
  \label{fig:rd-physics}
\end{subfigure}
\hfill
\begin{subfigure}{.255\textwidth}
  \centering
  \includegraphics[width=1.0\linewidth]{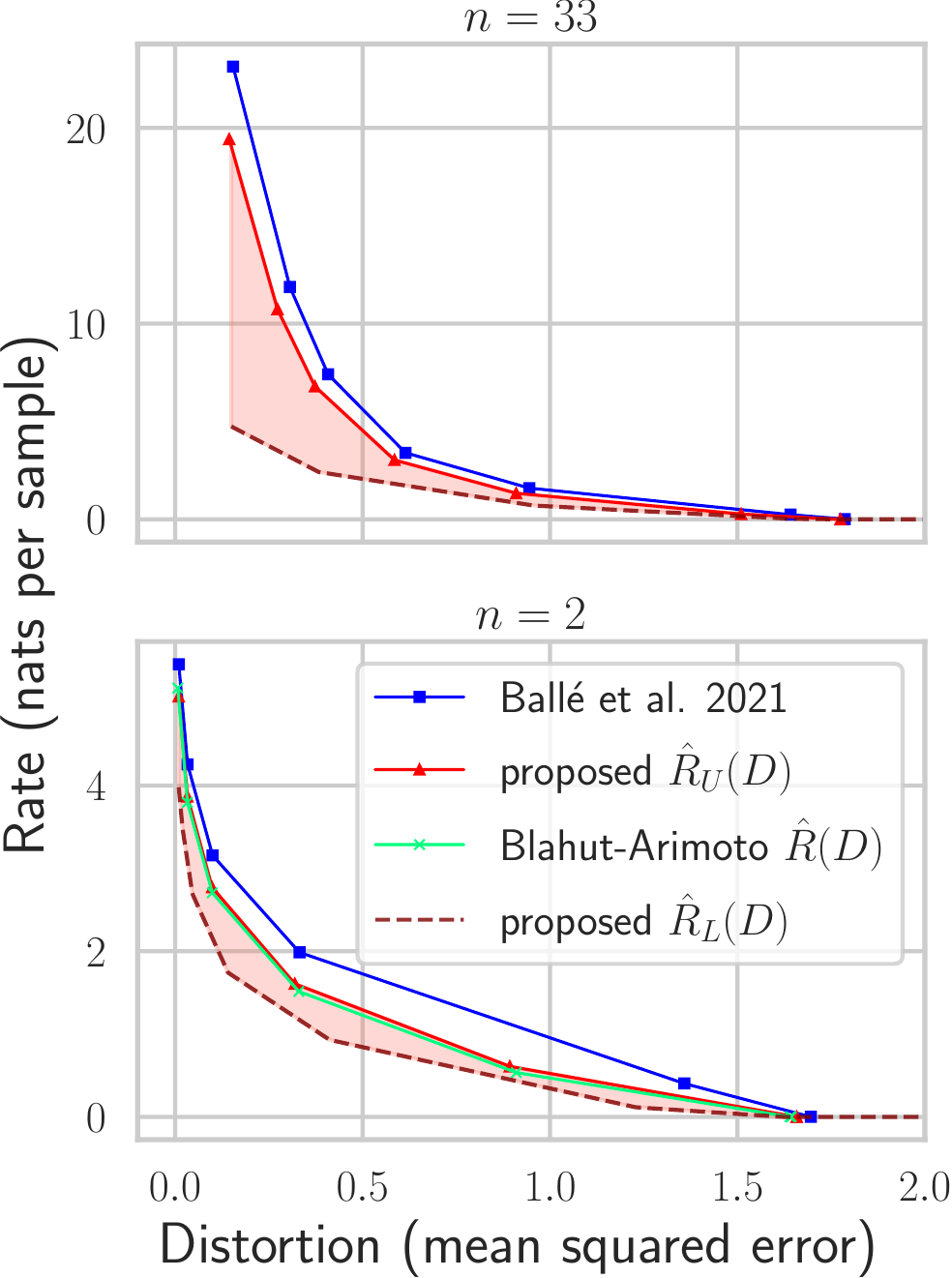}
  \caption{speech} 
  \label{fig:rd-speech}
\end{subfigure}
\hfill
\begin{subfigure}{.215\textwidth}
  \centering
  \includegraphics[width=0.7\linewidth]{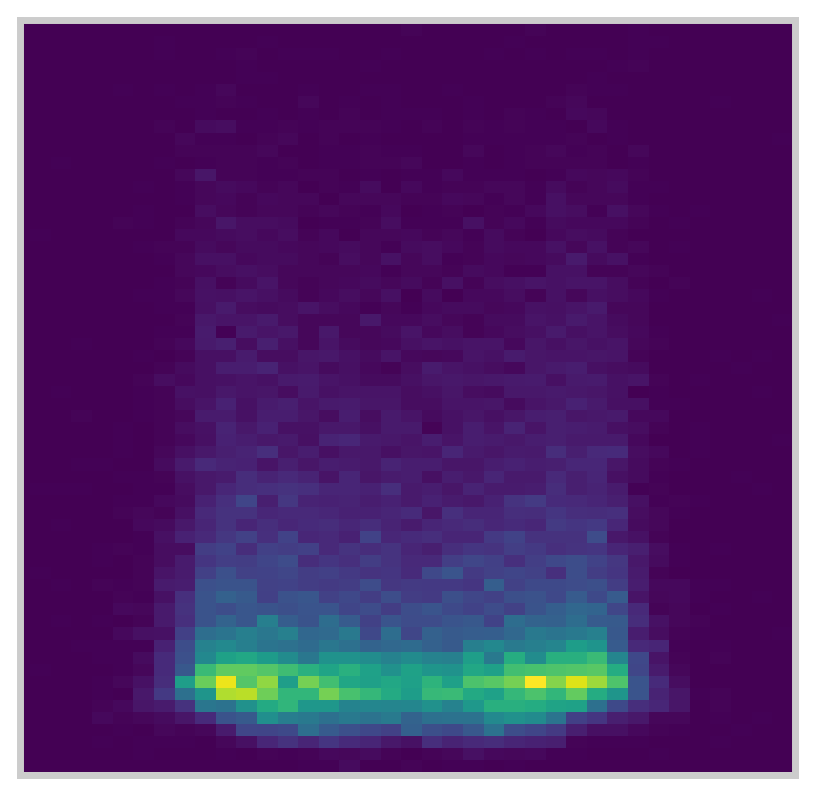}\vspace*{3mm}
    \includegraphics[width=1.0\linewidth]{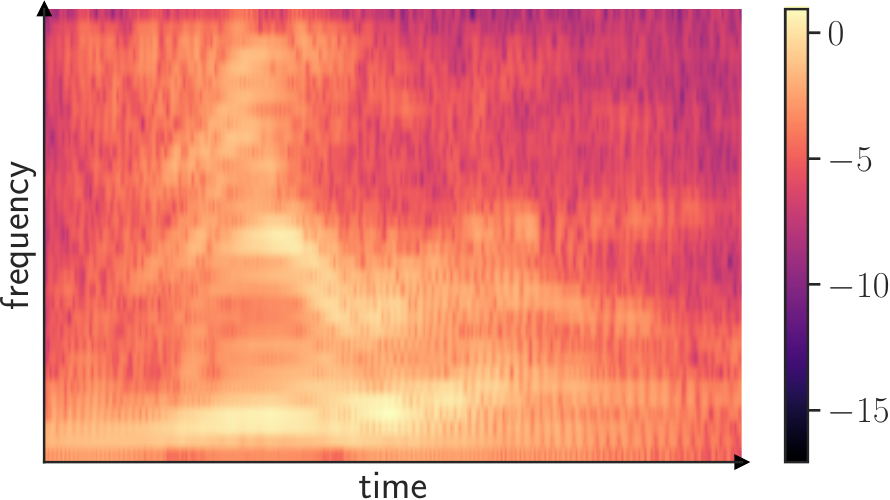}
  \caption{visualizations for physics and speech data} 
  \label{fig:vis-physics-speech}
\end{subfigure}
\vspace{-3mm}
\caption{\textbf{\ref{fig:rd-gaussian-main}} \emph{top}: R-D upper bound estimates on a randomly generated $n=$1000-dimensional Gaussian source; \emph{bottom}: R-D lower bound estimates on standard Gaussians with increasing dimensions (the result of the BA algorithm for $n=2$ is also shown for reference).  \textbf{\ref{fig:rd-physics}} \emph{top}: estimated R-D bounds and the R-D performance of \citet{balle2021ntc} on the particle physics dataset; \emph{bottom}: the same experiment but on a 2D marginal distribution of the data, to compare with the BA algorithm. %
\textbf{\ref{fig:rd-speech}}: the same set of experiments as in \textbf{\ref{fig:rd-physics}}, repeated on the speech dataset. \textbf{\ref{fig:vis-physics-speech}} \emph{top}: histogram of the 2D marginal distribution of the physics data; \emph{bottom}: example spectrogram computed on a speech clip.
} 
\label{fig:gaussian-etc}
\end{figure}

\subsection{Gaussian sources} \label{sec:gaussian-experiment}

We start by applying our algorithms to the factorized Gaussian distribution, one of the few sources with an analytical R-D function. We randomly generate the Gaussian sources in increasing dimensions.

For the upper bound algorithm, we let $Q_{\hX}$ and $\Qyx$ be factorized Gaussians with learned parameters, predicting the parameters of $\Qyx$ by a 1-layer MLP encoder. 
As shown in Fig.~\ref{fig:rd-gaussian-main}-\emph{top}, on a $n=1000$ dimensional Gaussian (the results are similar across all the $n$ we tested), our upper bound (\textbf{yellow} curve) accurately recovers the analytical $R(D)$.
We also optimized the variational distributions in a latent space $\setZ$ with varying dimensions, using an MLP decoder to map from $\setZ$ to $\sethX$ (see Sec.~\ref{sec:ub-via-beta-vae}). 
The resulting bounds are similarly tight when the latent dimension matches or exceeds the data dimension $n$ (\textbf{green}, \textbf{brown}), but become loose otherwise %
(\textbf{red} and \textbf{purple} curves),
demonstrating the importance of a rich enough latent space for a tight R-D bound, as suggested by our Theorem~\ref{thm:beta-VAE-RD}.

For the lower bound algorithm, we parameterize $\log u$ by a 2-layer MLP, and study the effect of source dimension $n$ and the number of samples $k$ used in our estimator $C_k$ (and objective $\tilde \ell_k$). To simplify comparison of results across different source dimensions, here we consider standard Gaussian sources, whose R-D curve does not vary with $n$ if we scale the rate by $\frac{1}{n}$ (i.e., rate per sample per dimension); the results on randomly generated Gaussians are similar.
First, we fix $k=1024$;  Fig.~\ref{fig:rd-gaussian-main}-\emph{bottom} shows that the resulting bounds quickly become loose with increasing source dimension.
This is due to the bias of our estimator $C_k$ for the sup-partition function, which causes under-estimation in the objective $\tilde \ell_k$.
While $C_k$ is defined similarly to an M-estimator \citep{van2000asymptotic}, analyzing its convergence behavior is not straightforward, as it depends on the function $u$ being learned.
In this experiment, we observe the bias of $C_k$ is amplified by an increase in $n$ or $\lambda$,  such that an increasingly large $k$ is required for effective training.
In another experiment, we estimate that the $k$ needed to close the gap in the lower bound increases exponentially in $n$; see results in Fig.~\ref{fig:gaussian-LB-F-vs-n}, and a detailed discussion on this, in Appendix~\ref{app-sec:gaussian-experiments}.
Fortunately, as we see in Sec.~\ref{sec:high-n-low-d-experiments}, the bias in our lower bound appears to  depend on the \emph{intrinsic} rather than (often much higher) nominal dimension of data, giving us a more favorable trade-off between computation and a tighter lower bound as controlled by $k$.

\subsection{Data from particle physics and speech}\label{sec:physics-and-speech}

The quickly deteriorating lower bound on higher (even 16) dimensional Gaussians may seem disheartening. However, the Gaussian is also the hardest continuous source to compress under squared error \citep{gibson2017rate}, %
and real-world data often exhibits considerably more structure than Gaussian noise. In this subsection, we experiment on data from particle physics \citep{howard2021foundations} and speech \citep{fsdd}, and indeed obtain improved sandwich bounds compared to Gaussians. %

First, we consider the $Z$-boson decay dataset from \citet{howard2021foundations}, containing $n$=16-dimensional vectors of four-momenta information from independent particle decay events.
We ran the neural compression method from \citep{balle2021ntc}, as well as our bounding algorithms with similar configurations to before, except we fix $k=2048$ for the lower bound, and use a normalizing flow for  $Q_Z$  in the upper bound model for better expressiveness.
Fig.~\ref{fig:rd-physics} \emph{top} shows the resulting estimated R-D bounds (sandwiched region colored in \textbf{red}) and the operational R-D curve for \citet{balle2021ntc} (\textbf{blue}). The resulting sandwich bounds appear significantly tighter here than on the Gaussian source with equal dimension ($n=16$, bottom curve in Fig.~\ref{fig:rd-gaussian-main} \emph{bottom}), and the neural compression method operates with a relatively small average overhead of 0.5 nat/sample relative to our upper bound.   To also compare to the ground-truth $R(D)$ as estimated by the BA algorithm, we created a 2-dimensional marginal data distribution (plotted in  Fig.~\ref{fig:vis-physics-speech} \emph{top}), so that the BA algorithm can be feasibly run with a fine discretization grid. %
As shown in Fig.~\ref{fig:rd-physics}, the BA estimate of $R(D)$ (\textbf{green}) almost overlaps with our upper bound on the 2D marginal, and is tightly sandwiched from below by our lower bound.

We then repeat the same experiments on speech data from the Free Spoken Digit Dataset \citep{fsdd}. 
We constructed our dataset by  pre-processing the audio recordings into spectrograms (see, e.g.,  Fig.~\ref{fig:vis-physics-speech} \emph{bottom}), then treating the resulting $n=33$-dimensional frequency feature vectors as independent across time.
As shown in Fig.~\ref{fig:rd-speech},  the gap in our R-D bounds appears wider than on the physics dataset (\emph{top}), but the results on the corresponding 2D marginal appear similar (\emph{bottom}).

\subsection{The Effect of Intrinsic v.s. Nominal Dimension of Data}\label{sec:high-n-low-d-experiments}

\newcommand{\subfigwidth}{0.45}
\begin{figure}[t]
\begin{minipage}{.25\textwidth}
  \setlength{\tabcolsep}{1pt}
    \begin{tabular}{ccc}
    \includegraphics[width=\subfigwidth\linewidth]{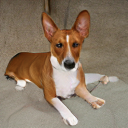} &   \includegraphics[width=\subfigwidth\linewidth]{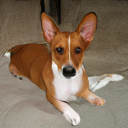}  \\
    \includegraphics[width=\subfigwidth\linewidth]{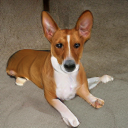} &   \includegraphics[width=\subfigwidth\linewidth]{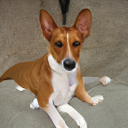}  \\
    \includegraphics[width=\subfigwidth\linewidth]{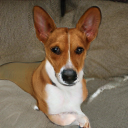} &   \includegraphics[width=\subfigwidth\linewidth]{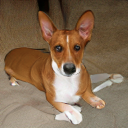}  \\
    \end{tabular}
  \label{fig:banana}
\end{minipage}
\begin{minipage}{.27\textwidth}
  \centering
  \includegraphics[width=1.0\linewidth]{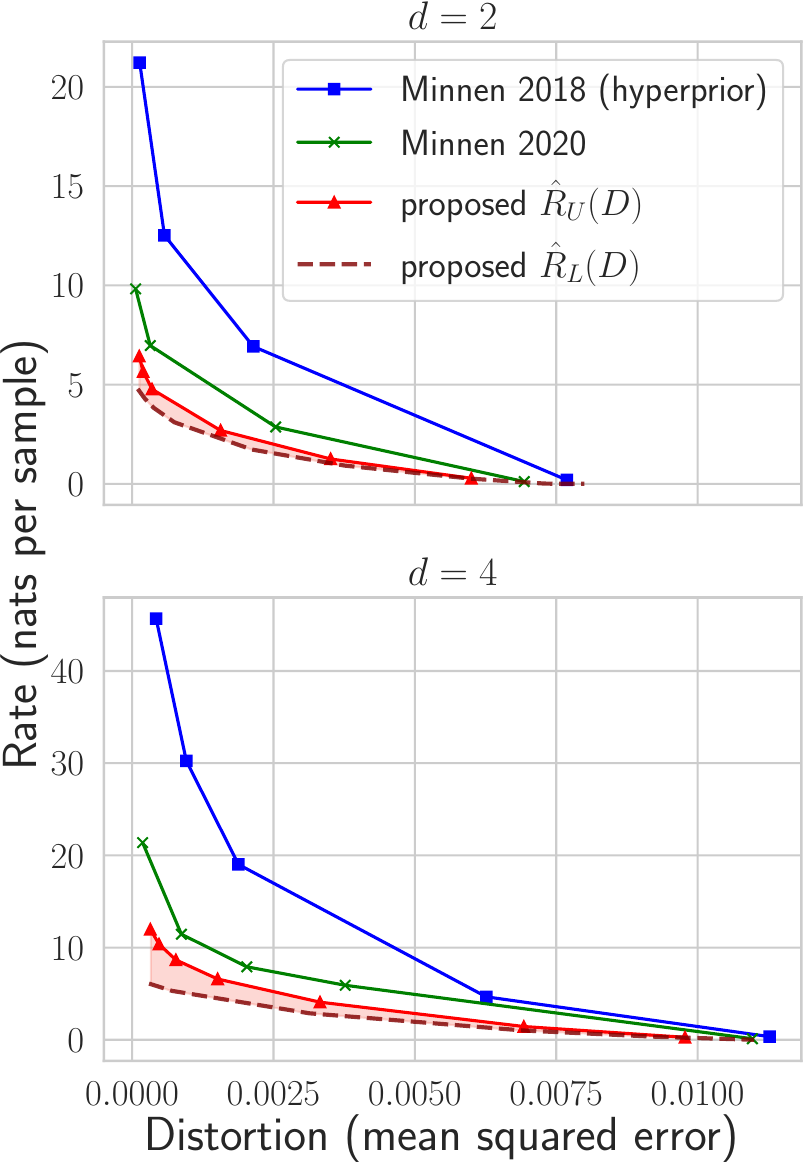}
  \label{fig:rd-basenji-2}
\end{minipage}
\hfill
\begin{minipage}{.47\textwidth}
  \centering
  \includegraphics[width=1.0\linewidth]{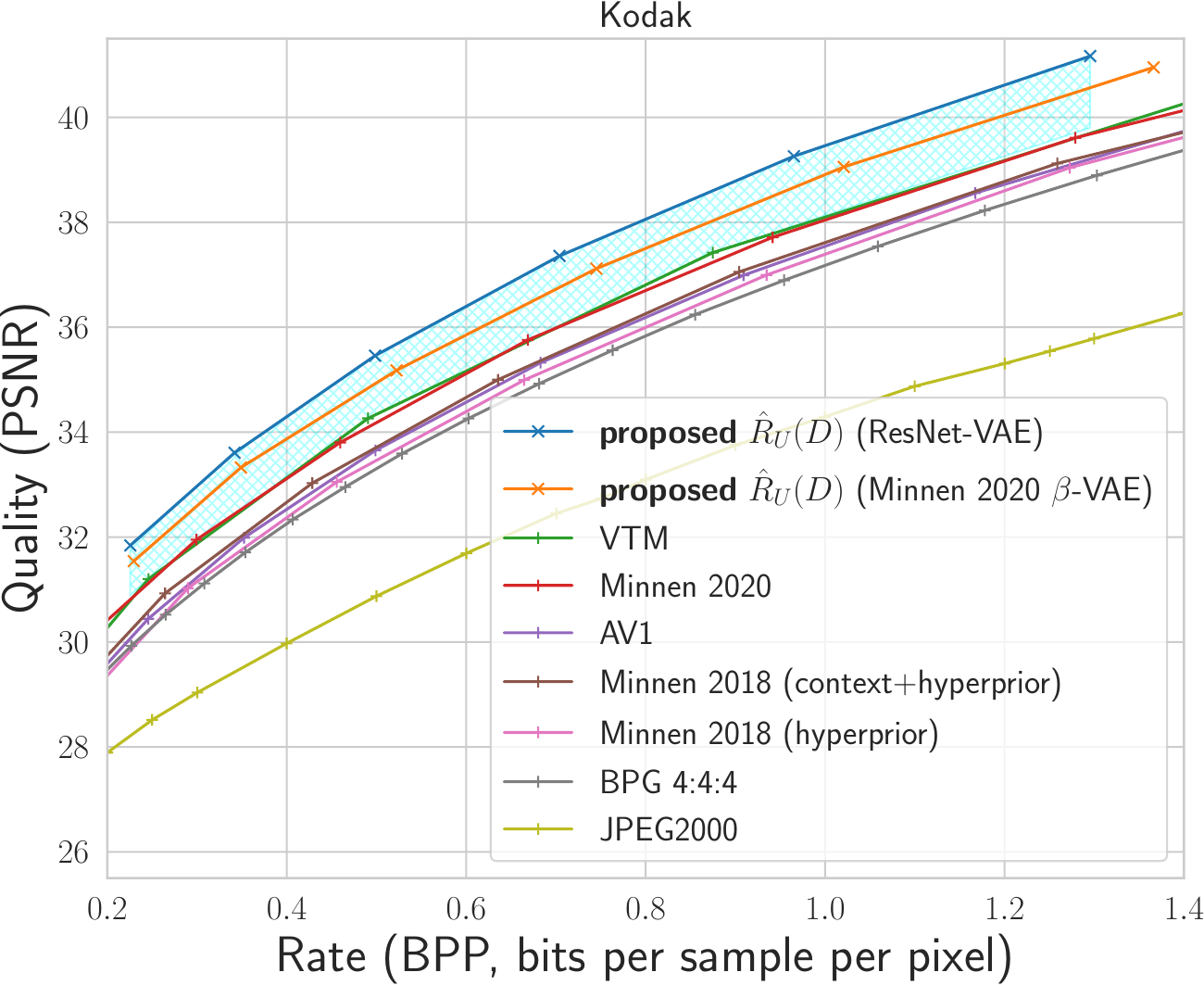}
  \label{fig:kodak}
\end{minipage}
\vspace{-5mm}
\caption{\textbf{Left}: 128$\times$128 GAN-generated images, with intrinsic dimension $d=4$.  \textbf{Middle}: Bounds on $R(D)$ of GAN images, for $d=2$ (top) and $d=4$ (bottom). \textbf{Right}: Quality-rate curves of ours and state-of-the-art image compression methods on \citet{kodak}, corresponding to R-D upper bounds. }
\label{fig:gan-img-compression}
\end{figure}

It is known that learning a manifold has a complexity that depends on the intrinsic dimension of the manifold, but not on its ambient dimension \citep{narayanan2010sample}. Our experiments suggest a similar phenomenon for our lower bound, and show that we can still obtain tight sandwich bounds on data with a sufficiently low \emph{intrinsic} dimension despite the high ambient dimension.

First, we explore the effect of increasing the ambient dimension, while keeping the intrinsic dimension of the data fixed. We borrow the 2D banana-source from \citet{balle2021ntc}, %
and randomly embed it in $\mathbb{R}^{n}$.
As shown in Fig.~\ref{fig:rd-banana-n=2} and Fig.~\ref{fig:rd-banana-n=higher} (in Appendix due to space constraint), our sandwich bounds on the 2D source appear tight, and closely agree with BA  (similar to the results in Fig.~\ref{fig:gaussian-etc}); moreover, the tightness appears unaffected by the increase in ambient dimension $n$ to 4, 16, and 100 (we verified this for $n$ up to 1000). Unlike in the Gaussian experiment, where increasing $n$ required seemingly exponentially larger $k$ for a good lower bound, here a constant $k= 2048$ worked well for all $n$.

Next, we experiment on high-dimension GAN-generated images with varying intrinsic dimension, and obtain R-D sandwich bounds that help assess neural image compression methods. 
Following \citet{pope2021intrinsic}, we generate 128 $\times$ 128 images of \texttt{basenji} from a pre-trained GAN, and control the intrinsic dimension by zeroing out all except $d$ dimensions of the noise input to the GAN.
As shown in Fig.~\ref{fig:gan-img-compression}-Left, the images appear highly realistic, showing dogs with subtly different body features and in various poses.
We implemented a 6-layer ResNet-VAE \citep{kingma2016improved} for our upper bound model, and a simple convolutional network for our lower bound model. 
Fig.~\ref{fig:gan-img-compression}-Middle plots our resulting R-D bounds and the sandwiched region (\textbf{red}), along with the operational R-D curves (\textbf{blue, green}) of neural image compression methods \citep{minnen2018joint, minnen2020channel} trained on the GAN images, for $d=2$ and $d=4$. We see that despite the high dimensionality ($n=128 \times 128 \times 3$), the images require few nats to compress; e.g., for $d=4$,  we estimate $R(D)$ to be between $\sim 4$ and 8 nats per sample at $D=0.001$ (30 dB in PSNR).
Notably, the R-D curve of \citet{minnen2020channel} stays roughly within a factor of 2 above our estimated true $R(D)$ region.
The neural compression methods show improved performance as $d$ decreases from 4 to 2, following the same trend as our R-D bounds.  This demonstrates the effectiveness of learned compression methods at adapting to low-dimensional structure in data, in contrast to traditional methods such as JPEG, whose R-D curve on this data does not appear to vary with $d$ and lies orders of magnitude higher.

\subsection{Natural Images}\label{sec:image-experiment}

To establish upper bounds on the R-D function of natural images, we define variational distributions $(Q_Z, \Qzx)$ on a Euclidean latent space for simplicity, and parameterize them as well as a learned decoder $\omega$ via hierarchical VAEs.
We borrow the convolutional autoencoder architecture of a state-of-the-art image compression model \citep{minnen2020channel}, and set the variational distributions to be factorized Gaussians with learned means and variances (we still use the deep factorized hyperprior, but no longer convolve it with a uniform prior). 
We also reuse our ($\beta$-)ResNet-VAE model with 6 layers of latent variables from the GAN experiments (Sec.~\ref{sec:high-n-low-d-experiments}). 
We trained models with mean-squared error (MSE) distortion and various $\lambda$ on the COCO 2017 \citep{lin2014microsoftcoco} images, and evaluated them on the \citet{kodak} and Tecnick \citep{tecnick} datasets. Following image compression conventions, we report the rate in bits-per-pixel, and the quality (i.e., negative distortion) in PSNR averaged over the images for each ($\lambda$, model) pair \footnote{Technically, to estimate an R-D upper bound with MSE as $\rho$, one should compute the distortion by averaging MSE on images; however, the results of many image compression baselines are only available in average PSNR.}. %
The resulting \emph{quality-rate}  (Q-R) curves can be interpreted as giving upper bounds on the R-D functions of the image-generating distributions. %
We plot them in  Fig.~\ref{fig:gan-img-compression}, along with the Q-R performance (in actual bitrate) of various traditional and learned image compression methods \citep{balle2017end,minnen2018joint, minnen2020channel}, for the Kodak dataset (see similar results on Tecnick in Appendix Fig.~\ref{fig:tecnick}). 
Our $\beta$-VAE model based on \citep{minnen2020channel} (\textbf{orange}) lies on average 0.7 dB  higher than the operational Q-R curves of the original model (\textbf{red}) and VTM (\textbf{green}).
With a more expressive model, our ($\beta$-)ResNet-VAE gives a higher Q-R curve (\textbf{blue}) that lies on average 1.1 dB above the state-of-the-art Q-R curves (gap shaded in \textbf{cyan}).
We leave it to future work to investigate which choice of autoencoder architecture and variational distributions are most effective, as well as how the theoretical R-D performance of such a $\beta$-VAE can be realized by a practical compression algorithm (see discussions in Sec.~\ref{sec:related-work}).

\section{Discussions}\label{sec:discussions}

In this work, we proposed machine learning techniques to computationally bound the rate-distortion function of a data source, a key quantity that characterizes the fundamental performance limit of all lossy compression algorithms, but is largely unknown. %
Departing from prior work in the information theory community \citep{gibson2017rate, riegler2018ratedistortion},
our approach applies broadly to general data sources and requires only i.i.d. samples, making it more suitable for real-world application. 

Our upper bound method is a  gradient descent version of the classic Blahut-Arimoto algorithm, and closely relates to (and extends) variational autoencoders from neural lossy compression research. 
Our lower bound method optimizes a dual characterization of the R-D function, which has been known for some time but seen little application outside of theoretical work. Due to difficulties involving global optimization, our lower bound currently requires a squared error distortion for tractability in the continuous case, and is only tight on %
data sources with a low \emph{intrinsic} dimension. We hope that a better understanding of the lower bound problem will lead to improved algorithms in the future.

To properly interpret bounds on the R-D function, we emphasize that the significance of the R-D function is two-fold: $1.$ for a given distortion tolerance $D$, no coding procedure can operate with a rate less than $R(D)$, and that $2.$ this rate is asymptotically achievable by some (potentially expensive) procedure. 
Thus, while a lower bound makes a universal statement about what performance is ``too good to be true'', the story is more subtle for the upper bound, due to the asymptotic nature of $R(D)$.
The achievability proof relies on a random coding procedure that jointly compresses multiple data samples in increasingly long blocks \citep{shannon1959coding}. When compressing at a finite block length $b$, %
$R(D)$ is generally no longer achievable due to a $O(\frac{1}{\sqrt{b}})$ rate overhead  \citep{kostina2012fixed}. Extending our work to the setting of finite block lengths may be another useful future direction.  %

\ifthenelse{\boolean{anon}}
{
\input{sections/ethics}
}
{
\subsubsection*{Acknowledgements}
Yibo Yang acknowledges support from the Hasso Plattner Institute at UCI. This material is in part based upon work supported by DARPA under Contract No. HR001120C0021. Stephan Mandt acknowledges support by the National Science Foundation (NSF) under the NSF CAREER Award 2047418; NSF Grants 1928718, 2003237 and 2007719; the Department of Energy under grant DESC0022331, as well as Intel, Disney, and Qualcomm. Any opinions, findings and conclusions or recommendations expressed in this material are those of the author(s) and do not necessarily reflect the views of DARPA or NSF.
}

\bibliography{references}
\bibliographystyle{iclr2022_conference}

\newpage
\appendix
\section{Appendix}
\subsection{Technical Definitions and Prerequisites }\label{app-sec:prereqs}
In this work we consider the source to be represented by a random variable  
$X: \Omega \to \mathcal{X}$, i.e.,  a measurable function on an underlying probability space $(\Omega, \mathcal{F}, \mathbb{P})$, and $P_X$ is the image measure of $\mathbb{P}$ under $X$.  We suppose the source and reproduction spaces are standard Borel spaces, $(\setX, \mathcal{A}_\setX)$ and $({\sethX}, \mathcal{A}_{\sethX})$, equipped with sigma-algebras $\mathcal{A}_\setX$ and $\mathcal{A}_{\sethX}$, respectively. Below we use the definitions of standard quantities from \citet{polyanskiy2015lecture}.

\paragraph{Conditional distribution}
The notation $\Qyx$ denotes an arbitrary conditional distribution (also known as a Markov kernel), i.e., it satisfies 
\begin{enumerate}
    \item For any $x \in \setX$, $Q_{\hX|X=x}(\cdot)$ is a probability measure on ${\sethX}$;
    \item For any measurable set $B \in  \mathcal{A}_{\sethX}$, $x \to Q_{\hX|X=x}(B)$ is a measurable function on $\setX$.
\end{enumerate}

\paragraph{Induced joint and marginal measures}
Given a source distribution $P_X$, each test channel distribution $\Qyx$ defines a joint distribution $P_X \Qyx$ on the product space $\setX \times {\sethX}$
(equipped with the usual product sigma algebra, $\mathcal{A}_\setX \times \mathcal{A}_{\sethX}$) as follows:
\[
P_X \Qyx (E) := \int_\setX P_X(dx)  \int_{\sethX} \mathbf{1}\{(x, \hx) \in E\} Q_{\hX|X=x}(d\hx),
\]
for all measurable sets $E \in \mathcal{A}_\setX \times \mathcal{A}_{\sethX}$.
The induced $\hx$-marginal distribution $P_{\hX}$ is then defined by 
\[
P_{\hX}(B) = \int_\setX   Q_{\hX|X=x}(B)  P_X(dx) ,
\]
for all measurable sets $\forall B \in \mathcal{A}_{\sethX}$.

\paragraph{KL Divergence}
We use the general definition of Kullback-Leibler (KL) divergence between two probability measures $P, Q$ defined on a common measurable space:
\[
KL(P\| Q) := \begin{cases}
\int \log \frac{dP}{dQ} dP, &\text{if } P \ll Q \\
\infty, &\text{otherwise}.
\end{cases}
\]
$P\ll Q$ denotes that $P$ is absolutely continuous w.r.t. $Q$ (i.e., for all measurable sets $E$, $Q(E) = 0 \implies P(E) =0 $).  $\frac{dP}{dQ}$ denotes the Radon-Nikodym derivative of $P$ w.r.t. $Q$; for discrete distributions, we can simply take it to be the ratio of probability mass functions; and for continuous distributions, we can simply take it to be the ratio of probability density functions.

\paragraph{Mutual Information}
Given $P_X$ and $\Qyx$, the mutual information $I(X ; Y)$ is defined as
\[
I(X; \hX) := KL( P_X \Qyx \| P_X \otimes P_{\hX}) = \E_{x \sim P_X} [ KL ( Q_{\hX|X=x} \| P_{\hX}  ) ],
\]
where $P_{\hX}$ is the $\hX$-marginal of the joint $P_X \Qyx$, $P_X \otimes P_{\hX}$ denotes the usual product measure, and $KL(\cdot\|\cdot)$ is the KL divergence. 

For the mutual information upper bound, it's easy to show that 
\begin{align}
    \I_U(\Qyx, Q_{\hX}) := \E_{x \sim P_X}[ KL(Q_{\hX|X=x} \|Q_{\hX})] = I(X; \hX) + KL(P_{\hX} \| Q_{\hX}),  \label{eq:variational-ub-on-MI}
\end{align}
so the bound is tight when $P_{\hX} = Q_{\hX}$.

\paragraph{Obtaining $R(D)$ through the Lagrangian.}
For each $\lambda \geq 0$, we define the Lagrangian by incorporating the distortion constraint in the definition of $R(D)$ through a linear penalty:
\begin{align}
    \L (\Qyx, \lambda) := I(X; \hX) + \lambda \E_{P_X \Qyx}[\rho(X, \hX)], \label{eq:RD-lagrangian}
\end{align}
and define its infimum w.r.t. $\Qyx$ by the function
\begin{align}
    F(\lambda) := \inf_{ \Qyx} I(X; \hX) + \lambda \E[\rho(X, \hX)].
\end{align}
Geometrically, $F(\lambda)$ is the maximum of the $R$-axis intercepts of straight lines of slope $-\lambda$, such that they have no point above the $R(D)$ curve \citep{csiszar1974}. 

Define $D_{min}:=\inf \{ D': R(D') < \infty \}$. Since $R(D)$ is convex, for each $D > D_{min}$,
there exists a $\lambda \geq 0$ such that the line of slope $-\lambda$ through $(D, R(D))$ is tangent to the $R(D)$ curve, i.e., 
\[
R(D') + \lambda D' \geq R(D) + \lambda (D) = F(\lambda), \quad \forall D'.
\]
When this occurs, we say that $\lambda$ \emph{is associated to} $D$.

Consequently, the $R(D)$ curve is then the envelope of lines with slope $-\lambda$ and $R$-axis intercept $F(\lambda)$.
Formally, this can be stated as:
\begin{lemma} %
\label{lemma:RD-lagrangian-characterization} \emph{(Lemma 1.2, \citet{csiszar1974}; Lemma 9.7, \citet{gray2011entropy}).}
For every distortion tolerance $D > D_{min}$, where $D_{min}:=\inf \{ D': R(D') < \infty \}$, it holds that 
\begin{align}
    R(D) = \max_{\lambda \geq 0} F(\lambda) - \lambda D
\end{align}
We can draw the following conclusions: 
\begin{enumerate}
    \item For each $D > D_{min}$, the maximum above is attained iff $\lambda$ is associated to $D$.
    \item For a fixed $\lambda$, if $Q^*_{\hX|X}$ achieves the minimum of  $\L(\cdot, \lambda)$, then $\lambda$ is associated to the point $(\rho(Q^*_{\hX|X}), \I(Q^*_{\hX|X}), )$; i.e., there exists a line with slope $-\lambda$ that is tangent to the $R(D)$ curve at $(\rho(Q^*_{\hX|X}), \I(Q^*_{\hX|X}))$, where we defined the shorthand $\rho(\Qyx):= \E[\rho(X,\hX)]$ and $\I(\Qyx):= I(X; \hX)$ as induced by $P_X \Qyx$.
\end{enumerate}
\end{lemma}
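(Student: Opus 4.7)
The plan is to prove the identity through convex duality, treating $F(\lambda)$ as (minus) the Legendre--Fenchel conjugate of the R-D function. I would first record the standard regularity properties of $R(\cdot)$ on $(D_{min}, \infty)$: it is convex (by convexity of mutual information in $\Qyx$ with fixed $P_X$, combined with a time-sharing/mixing argument on feasible channels), non-increasing (enlarging the feasibility set $\{\Qyx: \E[\rho(X,\hX)]\leq D\}$ can only lower the infimum), and hence continuous on the interior of its effective domain. These three facts are what make the duality identity go through.

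Next, I would establish weak duality. For any $\Qyx$ feasible at level $D$ (so $\E[\rho]\leq D$) and any $\lambda\geq 0$,
\begin{align*}
I(X;\hX) + \lambda \E[\rho(X,\hX)] \;\leq\; I(X;\hX) + \lambda D.
\end{align*}
Taking the infimum over feasible $\Qyx$ on the right and over all $\Qyx$ on the left gives $F(\lambda)\leq R(D)+\lambda D$, i.e.\ $F(\lambda)-\lambda D \leq R(D)$, so $\sup_{\lambda\geq 0}\{F(\lambda)-\lambda D\}\leq R(D)$. For strong duality, fix $D>D_{min}$ and use convexity of $R$: there exists a subgradient at $D$, which by monotonicity must be $-\lambda^*$ for some $\lambda^*\geq 0$, so $R(D')\geq R(D)-\lambda^*(D'-D)$ for all $D'$. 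Then for any $\Qyx$ with $\E[\rho]=D'$,
\begin{align*}
I(X;\hX)+\lambda^*\E[\rho(X,\hX)] \;\geq\; R(D')+\lambda^* D' \;\geq\; R(D)+\lambda^* D,
\end{align*}
and taking the infimum yields $F(\lambda^*)\geq R(D)+\lambda^* D$. Combined with weak duality this forces equality and proves the main formula, with the supremum attained at $\lambda^*$.

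Conclusion~1 is then immediate from the argument above: the maximum is attained at $\lambda$ iff the line of slope $-\lambda$ through $(D,R(D))$ supports the epigraph of $R$, which is precisely the definition of $\lambda$ being associated to $D$. For Conclusion~2, suppose $Q^*_{\hX|X}$ achieves the infimum defining $F(\lambda)$, and set $D^*:=\E_{P_X Q^*}[\rho(X,\hX)]$ and $R^*:=I(X;\hX)$ under $P_X Q^*$. Since $Q^*$ is feasible at level $D^*$ we have $R(D^*)\leq R^*$. Conversely, weak duality gives $R(D^*)\geq F(\lambda)-\lambda D^* = R^*$, so $R(D^*)=R^*$. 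Substituting back into weak duality at an arbitrary $D'$ produces $R(D')\geq R(D^*)-\lambda(D'-D^*)$, i.e.\ $\lambda$ is associated to $D^*$.

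The main obstacle will be making the convexity and existence-of-subgradient arguments rigorous in the abstract measurable setting, since $\Qyx$ ranges over arbitrary Markov kernels on standard Borel spaces; in particular, verifying that convex combinations of kernels remain admissible and that the resulting $R(D)$ is genuinely convex (and proper) on $(D_{min},\infty)$ requires the standard measurability and lower semicontinuity properties of mutual information. A minor boundary issue is the behavior at $D=D_{min}$, where $R$ may be non-differentiable or infinite-slope; restricting to $D>D_{min}$ as stated in the lemma avoids this. Once these foundational facts are in hand, the duality argument itself is short and essentially the Fenchel-biconjugate theorem specialized to a convex non-increasing function on the real line.
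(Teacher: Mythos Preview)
The paper does not give its own proof of this lemma; it is stated as a known result and attributed to \citet{csiszar1974} (Lemma~1.2) and \citet{gray2011entropy} (Lemma~9.7). Your proposal---weak duality plus a supporting-hyperplane/subgradient argument exploiting the convexity and monotonicity of $R(\cdot)$---is precisely the standard proof given in those references, and the argument as you outline it is correct (your weak-duality paragraph is slightly awkwardly phrased, but the inequality chain $F(\lambda)\le I(Q)+\lambda\E[\rho]\le I(Q)+\lambda D$ for feasible $Q$, then infimizing the right side, delivers what you need).
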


\subsection{Full Version of Theorem \ref{thm:dual-RD-formulation}}
\begin{theorem} \label{thm:full-dual-RD-formulation}
\emph{(Theorem 1, \citet{kostina2016shannon}.)} 
Suppose that the following basic assumptions are satisfied.
\begin{enumerate}
    \item $R(D)$ is finite for some $D$, i.e., $D_{\text{min}} := \inf \{D: R(D) < \infty \} < \infty $;
    \item The distortion metric $\rho$ is such that there exists a finite set $E \subset {\sethX}$ such that 
    \[
    \E[\min_{\hx \in E} \rho(X, \hx)] < \infty
    \]
\end{enumerate}
Then, 
for each $D > D_{\text{min}}$, it holds that
\begin{align}
    R(D) = \max_{g(x), \lambda} \{ \E[ - \log g(X)]  - \lambda D \} \label{eq:full-lb-maximization-problem}
\end{align}

where the maximization is over $g(x) \geq 0$ and $\lambda \geq 0$
satisfying the constraint
\begin{align}
    \E\left[ \frac{\exp(-\lambda \rho(X, \hx))} {g(X)} \right] = \int \frac{\exp(-\lambda \rho(x, \hx))} {g(x)} d P_X(x) \leq 1, \forall \hx \in {\sethX} %
\end{align}
\end{theorem}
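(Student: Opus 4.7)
The plan is to reduce this joint maximization to two ingredients already in hand: the Lagrangian envelope representation $R(D) = \max_{\lambda \geq 0}\{F(\lambda) - \lambda D\}$ from Lemma~\ref{lemma:RD-lagrangian-characterization}, and a fixed-$\lambda$ dual identity $F(\lambda) = \sup_{g} \E[-\log g(X)]$, where $g$ ranges over nonnegative functions satisfying Eq.~\ref{eq:lb-master-constraint} for the given $\lambda$. Granted these two facts, one may merge the outer $\max_\lambda$ with the inner $\sup_g$ into the joint $\max_{g,\lambda}$ appearing in the statement, because the admissibility constraint in Eq.~\ref{eq:lb-master-constraint} is jointly a constraint on the pair $(g,\lambda)$, while the objective $\E[-\log g(X)] - \lambda D$ decouples. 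Assumption 1 ensures $D_{\min} < \infty$ so the Lemma applies on a nontrivial interval, and Assumption 2 furnishes the growth control on $\rho$ needed for the inner dual identity.

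For the direction $F(\lambda) \geq \E[-\log g(X)]$, I would fix any $\Qyx$ with induced marginal $P_{\hX}$ and any admissible $g$, and start from the identity
$$I(X;\hX) + \lambda \E[\rho(X,\hX)] - \E[-\log g(X)] = \E_{P_X \Qyx}\!\left[\log\frac{dQ_{\hX|X}\,g(X)\exp(\lambda\rho(X,\hX))}{dP_{\hX}}\right],$$
which holds by direct expansion of the mutual information. Applying Jensen's inequality $\E[\log U] \leq \log \E[U]$ to the reciprocal of the integrand and then Fubini's theorem bounds the right-hand side below by $-\log \int\!\int \exp(-\lambda\rho(x,\hx))/g(x)\,dP_X(x)\,dP_{\hX}(\hx)$, which is $\geq -\log 1 = 0$ by the admissibility constraint Eq.~\ref{eq:lb-master-constraint}. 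Taking the infimum over $\Qyx$ gives $F(\lambda) \geq \E[-\log g(X)]$, then the supremum over $g$.

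For the reverse direction, given an (approximate) minimizer $\Qyx^*$ of the $\lambda$-Lagrangian with induced marginal $P_{\hX}^*$, I would define $g^*(x) := \int \exp(-\lambda\rho(x,\hx))\,dP_{\hX}^*(\hx)$. Admissibility of $g^*$ is the Karush-Kuhn-Tucker (KKT) condition characterizing the R-D minimizer, while tightness follows by substitution: at the optimum $dQ^*_{\hX|X=x}/dP_{\hX}^*(\hx) = \exp(-\lambda\rho(x,\hx))/g^*(x)$, so the rate and distortion terms collapse to $\E[-\log g^*(X)]$. The main obstacle is precisely this achievability half in the general (possibly continuous, unbounded) setting: existence of $\Qyx^*$ may fail, and the KKT inequality must be verified for \emph{every} $\hx \in \sethX$ rather than merely $P_{\hX}^*$-a.e. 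Assumption 2 is the growth condition on $\rho$ that makes this go through; when an exact minimizer does not exist, one proceeds with a minimizing sequence, extracts a weak limit of the induced marginals, and shows the limiting $g^*$ is still admissible, following the approach of \citet{csiszar1974} as adapted in \citet{kostina2016shannon}.
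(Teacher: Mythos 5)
Your proposal is correct and matches the paper's approach: the paper gives no self-contained proof of this theorem, citing it from \citet{kostina2016shannon} and remarking only that it "can be seen as a consequence of Theorem~\ref{thm:dual-RD-formulation} in conjunction with Lemma~\ref{lemma:RD-lagrangian-characterization}" --- exactly the decomposition you use, with the achievability half deferred to \citet{csiszar1974} in both cases. Your Jensen/Fubini sketch of the weak direction is the standard argument and is sound.
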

Note: the basic assumption 2 is trivially satisfied when the distortion $\rho$ is bounded from above; the maximization over $g(x) \geq 0$ can be
restricted to only $1 \geq g(x) \geq 0$. Unless stated otherwise, we use $\log$ base $e$ in this work, so the $R(D)$ above is in terms of nats (per sample).

Theorem \ref{thm:full-dual-RD-formulation} can be seen as a consequence of Theorem \ref{thm:dual-RD-formulation} in conjunction with Lemma \ref{lemma:RD-lagrangian-characterization}. We implemented an early version of our lower bound algorithm based on Theorem \ref{thm:full-dual-RD-formulation}, generating each R-D lower bound by fixing a target $D$ value and optimizing over both $\lambda$ and $g$ as in Equation \ref{eq:full-lb-maximization-problem}. However, the algorithm often  diverged due to drastically changing $\lambda$. We therefore based our current algorithm on Theorem \ref{thm:dual-RD-formulation}, producing R-D lower bounds by fixing $\lambda$ and only optimizing over $g$ (or $u$, in our formulation).

\subsection{Theoretical Results}\label{app-sec:theoretical-results}

\begin{theorem}\label{thm:beta-VAE-RD} \emph{(A decoder-induced R-D function bounds the source R-D function from above).}
Let $X \sim P_X$ be a memoryless source subject to lossy compression under distortion $\rho$.
Let $\mathcal{Z}$ be any measurable space (``latent space'' in a VAE), and $\omega: \mathcal{Z} \to \mathcal{\hX}$ any measurable function (``decoder'' in a VAE). This induces a new lossy compression problem with $\mathcal{Z}$ being the reproduction alphabet, under a new distortion function $\rho_\omega: \mathcal{X}\times\mathcal{Z} \to [0, \infty), \rho_\omega(x,z) = \rho(x, \omega(z))$.
Define the corresponding $\omega$-dependent rate-distortion function 
\[
R_\omega(D) := \inf_{Q_{Z|X}: \E[\rho_\omega(X, Z) ]\leq D} I(X; Z) = \inf_{Q_{Z|X}: \E[\rho(X, \omega(Z)) ]\leq D} I(X; Z).
\]
Then for any $D \geq 0$, $R_\omega(D) \geq R(D)$. Moreover, the inequality is tight if $\omega$ is bijective (so that there is ``no information loss'').

\end{theorem}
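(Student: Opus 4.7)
The plan is to prove both directions via the data processing inequality together with a simple feasibility argument to move between the two R-D problems. For the inequality $R_\omega(D) \geq R(D)$, I would start with an arbitrary $Q_{Z|X}$ feasible for $R_\omega(D)$, i.e., satisfying $\E[\rho(X,\omega(Z))]\leq D$, and construct the induced kernel $Q_{\hat X | X}$ obtained by pushing $Q_{Z|X}$ forward through $\omega$. Since $\omega$ is measurable, $\hat X := \omega(Z)$ is a well-defined random variable in $\hat{\mathcal X}$, and feasibility transfers automatically: $\E[\rho(X,\hat X)] = \E[\rho(X,\omega(Z))] \leq D$, so $Q_{\hat X|X}$ is admissible in the optimization defining $R(D)$.

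Next I would invoke the data processing inequality on the Markov chain $X \to Z \to \omega(Z) = \hat X$, which yields $I(X;Z) \geq I(X;\hat X) \geq R(D)$, the last inequality by definition of $R(D)$ as the infimum over feasible kernels. Taking the infimum on the left over all feasible $Q_{Z|X}$ gives $R_\omega(D) \geq R(D)$, which is the first claim.

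For the tightness statement when $\omega$ is bijective, the argument goes the other way: given any feasible $Q_{\hat X|X}$ for $R(D)$, define $Q_{Z|X}$ by pushing forward through $\omega^{-1}$, setting $Z := \omega^{-1}(\hat X)$. (One should assume $\omega$ is bi-measurable so that $\omega^{-1}$ is a valid measurable map; I would note this technical point in the proof.) Then $\rho_\omega(X,Z) = \rho(X,\omega(\omega^{-1}(\hat X))) = \rho(X,\hat X)$, so feasibility again transfers. Because mutual information is invariant under bi-measurable bijections of either argument, $I(X;Z) = I(X;\omega^{-1}(\hat X)) = I(X;\hat X)$, and hence $R_\omega(D) \leq R(D)$; combined with the first part, equality holds.

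The only real obstacle is the measure-theoretic hygiene in the bijective case, specifically ensuring that $\omega^{-1}$ is measurable so that $Q_{Z|X}$ is a bona fide Markov kernel and the mutual information identity $I(X;Z)=I(X;\omega(Z))$ is justified. Since the paper works in standard Borel spaces (see Appendix~\ref{app-sec:prereqs}), any measurable bijection between standard Borel spaces is automatically bi-measurable by Kuratowski's theorem, so this causes no real difficulty; I would simply cite this fact. Everything else is a direct application of the data processing inequality and the invariance of mutual information under measurable isomorphisms, both of which are standard.
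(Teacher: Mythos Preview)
Your proposal is correct and follows essentially the same approach as the paper: pushforward through $\omega$ plus the data processing inequality for the inequality $R_\omega(D)\geq R(D)$, and the reverse construction via $\omega^{-1}$ for the bijective case. The only cosmetic difference is that the paper obtains $I(X;\hat X)\geq I(X;Z)$ in the bijective case by a second application of DPI rather than citing invariance of mutual information under measurable isomorphisms; your extra care about bi-measurability (via Kuratowski's theorem in standard Borel spaces) is a welcome technical refinement that the paper leaves implicit.
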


\begin{proof}
Fix $D$. Take any admissible conditional distribution $Q_{Z|X}$ that satisfies $\E[\rho(X, \omega(Z)) ]\leq D$ in the definition of $R_\omega(D)$. Define a new kernel $\Qyx$ between $\mathcal{X}$ and $\mathcal{\hX}$ by $Q_{\hX|X=x} := Q_{Z|X=x} \circ \omega^{-1}, \forall x \in \mathcal{X}$, i.e.,  $Q_{\hX|X=x}$ is the image measure of $Q_{Z|X=x}$ induced by $\omega$.
Applying data processing inequality to the Markov chain $X \xrightarrow{\Qzx} Z \xrightarrow{\omega}  \hX$, we have $I(X;Z) \geq I(X; \hX)  $.

Moreover, since $\Qyx$ is admissible in the definition of $R(D)$, i.e., 
\begin{align*}
    \E_{P_X \Qyx}[\rho(X, \hX) ] 
    &= 
    \E_{P_X Q_{Z|X}}[\rho(X, \omega(Z)) ]\leq D
\end{align*}
we therefore have
\[
I(X;Z) \geq I(X; \hX) \geq R(D) = \inf_{\Qyx: \E[\rho(X, \hX) ]\leq D} I(X; \hX).
\]
Finally, since $I(X;Z) \geq R(D) $ holds for any admissible $Q_{Z|X}$, taking infimum over such $Q_{Z|X}$ gives $R_\omega(D) \geq R(D) $.

To prove $R_\omega(D) = R(D)$ if $\omega$ is bijective, it suffices to show that $R(D) \geq R_\omega(D)$. We use the same argument as before. Take any admissible $\Qyx$ in the definition of $R(D)$. We can then construct a $Q_{Z|X}$ by the process $X \xrightarrow{\Qyx} \hX \xrightarrow{\omega^{-1}} Z $.
Then by DPI we have $I(X; \hX) \geq I(X;Z)$. Morever, $\Qzx$ is admissible: $\E_{P_X Q_{Z|X}}[\rho(X, \omega(Z)) ] = \E_{P_X \Qyx}[\rho(X, \omega(\omega^{-1}(\hX)) ] = \E_{P_X \Qyx}[\rho(X, \hX) ] \leq D$. So $I(X; \hX) \geq I(X;Z) \geq R_\omega(D)$. Taking infimum over such $\Qyx$ concludes the proof.  

\end{proof}

\begin{remark}
Although $\omega^{-1} ( \omega( \circ)) = \operatorname{Identity}(\circ)$ for any injective $\omega$, our construction of $Q_{Z|X}$ in the last argument requires the inverse of $\omega$ to exist, so that additionally $\omega(\omega^{-1}(\hX)) = \hX$. Several learned image compression methods have advocated for the use of sub-pixel convolution, i.e, convolution followed by (invertible) reshaping of the results, over upsampled convolution in the decoder, in order to produce better reconstructions \citep{theis2017cae, cheng2020learned}. This can be seen as making the decoder more bijective, therefore reducing the gap of $R_\omega(D)$ over $R(D)$, in light of our above theorem.
\end{remark}
\vspace*{1em}

\begin{corollary}\label{corr:beta-VAE-RD} \emph{(A suitable $\beta$-VAE defines an upper bound on the source R-D function).}
Let $X \sim P_X$ be a data source, $\rho: \setX \times \sethX \to [0, \infty)$ a distortion function, and $\mathcal{Z}$ be any latent space. Consider any $\beta$-VAE consisting of a prior distribution $Q_Z$ on $\mathcal{Z}$, and an encoder which specifies a conditional distribution $Q_{Z|X=x}$ for each $x$. 
Suppose further that the likelihood (observation) model is chosen to have density $p(x|z) \propto \exp\{ - \rho(x, \omega(z)\}$ for some decoder function $\omega: \mathcal{Z} \to \mathcal{\hX}$ (a common example being an isotropic Gaussian $p(x|z)$ with mean $\omega(z)$ and fixed variance, specified by a squared error distortion).

Then the two terms of the negative ELBO --- specifically, the ``KL term''
\[
\R := \E_{x \sim P_X}[ KL(Q_{Z|X=x} \|Q_Z)],
\]
and the ``log-likelihood term'' up to a constant,
\[
\D: = \E_{P_X \Qzx}[\rho(X, \omega(Z))] = \E_{P_X \Qzx}[- \log p(X|Z) ] + \emph{const},
\]
define a point $(\D, \R)$ that lies above the source $R(D)$-curve; i.e., $\R \geq R(\D) $.
\end{corollary}

\begin{proof}

\[
 \R \geq I_{P_X Q_{Z|X}}(X; Z) \geq \inf_{Q'_{Z|X}: \E_{P_X Q'_{Z|X}}[\rho_\omega(X, Z) ]\leq \D} I_{P_X Q'_{Z|X}}(X; Z) =: R_\omega(\D) \geq R(\D)
\]
The first inequality is the variational upper bound on mutual information (Eq.~\ref{eq:variational-ub-on-MI}),
the second inequality follows from the definition of the $\omega$-induced R-D function (and the fact that $\E_{P_X Q_{Z|X}}[\rho_\omega(X, Z) ] =: \D$), and the last inequality is Theorem \ref{thm:beta-VAE-RD}.

\end{proof}

\begin{remark}
Technically, the distortion $\D: = \E_{P_X \Qzx}[\rho(X, \omega(Z))]$ may not always equal the negative log-likelihood $ \E_{P_X \Qzx}[- \log p(X|Z) ]$ up to a constant for any arbitrary $\rho$. 
This is not an issue in the common case of $\setX = \mathbb{R}^n$ and a difference distortion, i..e, $\rho(x, \hx) = \rho_0(x - \hx)$ for some function $\rho_0$; the constant is simply the log normalizier of the density $p(x|z)$, i.e., $\text{const} = \log \int \exp\{ - \rho(x, \omega(z)\} dx = \log \int \exp\{ - \rho_0(x) \} dx $, and is allowed to equal $\infty$. 
\end{remark}
\vspace*{1em}

\begin{theorem}\label{thm:sup_estimator}\emph{(Basic properties of the proposed estimator $C_k$ of the sup-partition function.)}
Let $X_1, X_2, ... \sim P_X$ be a sequence of i.i.d. random variables. Let $\psi: \setX \times \sethX \to \mathbb{R}^+$ be a measurable function. For each $k$, define the random variable $C_k:=\sup_{\hx} \frac{1}{k} \sum_i \psi(X_i, \hx)$. Then
\begin{enumerate}
    \item $C_k$ is an overestimator of the sup-partition function $c$, i.e., \\ $\E[C_k] = \E_{X_1, ..., X_k}[\sup_{\hx} \frac{1}{k} \sum_i \psi(X_i, \hx)] \geq \sup_{\hx} \E[\psi(X, \hx)] =: c$;
    \item The bias of $C_k$ decreases with increasing $k$, i.e., \\ $\E[C_1] \geq \E[C_2] \geq ... \geq \E[C_k] \geq \E[C_{k+1}] \geq ... \sup_{\hx} \E[\psi(X, \hx)] = c$;
    \item If $\psi(x, \hx)$ is bounded and continuous in $\hx$, and if $\mathcal{\hX}$ is compact, then $C_k$ is strongly consistent, i.e., $C_k$ converges to $c$ almost surely (as well as in probability, i.e., $\lim_{k\to\infty} \mathbb{P}(|C_k - c|>\epsilon) = 0, \forall \epsilon>0$), and $\lim_{k\to \infty} \E[C_k] = c$.
\end{enumerate}
\end{theorem}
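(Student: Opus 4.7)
The plan is to handle the three claims in order. For part (1), I would apply the supremum form of Jensen's inequality: for any fixed $\hx^\star\in\sethX$ we have $\sup_{\hx}\frac{1}{k}\sum_i \psi(X_i,\hx) \geq \frac{1}{k}\sum_i \psi(X_i,\hx^\star)$, so taking expectations gives $\E[C_k] \geq \E[\psi(X,\hx^\star)]$, and then optimizing the right-hand side over $\hx^\star$ yields $\E[C_k] \geq c$. This is the same elementary move that underlies the IWAE overestimation property.

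For part (2), I would imitate the \citet{burda2015importance} proof of IWAE monotonicity. The key identity is that the $(k{+}1)$-sample empirical mean equals the average of $k{+}1$ leave-one-out $k$-sample means: $\frac{1}{k+1}\sum_{i=1}^{k+1}\psi(X_i,\hx) = \frac{1}{k+1}\sum_{j=1}^{k+1}\bigl[\frac{1}{k}\sum_{i\neq j}\psi(X_i,\hx)\bigr]$. Since the supremum is subadditive over sums,
\begin{align*}
C_{k+1} \;\leq\; \frac{1}{k+1}\sum_{j=1}^{k+1}\sup_{\hx}\frac{1}{k}\sum_{i\neq j}\psi(X_i,\hx).
\end{align*}
Taking expectations and using the exchangeability of $X_1,\dots,X_{k+1}$, each of the $k{+}1$ leave-one-out terms has mean $\E[C_k]$, yielding $\E[C_{k+1}]\leq\E[C_k]$. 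Combined with part (1), this gives the full monotone chain down to $c$.

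Part (3) is where the substantive analysis lies. Under compactness of $\sethX$ and boundedness plus continuity of $\psi(x,\cdot)$, I would invoke a uniform law of large numbers to show that $\Delta_k := \sup_{\hx\in\sethX}\bigl|\frac{1}{k}\sum_i \psi(X_i,\hx) - \E[\psi(X,\hx)]\bigr|$ converges to $0$ almost surely. The standard route is a covering argument: fix a finite $\epsilon$-net of $\sethX$, apply the pointwise strong LLN at each net point, and control the oscillation between net points using equicontinuity (guaranteed by the uniform bound on $|\psi|$ together with dominated convergence, which also ensures continuity of the limit $\hx\mapsto\E[\psi(X,\hx)]$). Since $|C_k - c|\leq \Delta_k$, this delivers $C_k\to c$ almost surely and therefore in probability. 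Finally, bounded (dominated) convergence lifts almost-sure convergence to $\E[C_k]\to c$.

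The main obstacle is part (3): one must verify that measurable-selection issues do not spoil the supremum (automatic here since continuity plus compactness makes the sup attained and measurable) and that the stated continuity assumption really suffices for uniform convergence of the empirical averages. Parts (1) and (2) are essentially bookkeeping on top of Jensen/subadditivity and do not require any structural assumption on $\sethX$ beyond measurability and finiteness of the relevant expectations.
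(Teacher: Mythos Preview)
Your proposal is correct, and for parts (2) and (3) it in fact takes a cleaner route than the paper's own argument.

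Part (1) is identical to the paper: swap $\sup$ and $\E$ via the elementary Jensen-type inequality. For part (2), the paper does \emph{not} use the leave-one-out symmetrization you propose. Instead it splits off the $(k{+}1)$st summand, applies subadditivity of $\sup$ to get $\E[C_{k+1}]\le \frac{k}{k+1}\E[C_k]+\frac{1}{k+1}\E[C_1]$, and then concludes $\le \E[C_k]$. But that last step requires $\E[C_1]\le\E[C_k]$, whereas the paper has just shown the opposite inequality $\E[C_1]\ge\E[C_k]$; so as written the paper's chain does not close. Your leave-one-out argument (the actual \citet{burda2015importance} device) avoids this entirely and is the right fix. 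For part (3), the paper argues that almost-sure pointwise convergence of $\gamma_k(\hx):=\frac{1}{k}\sum_i\psi(X_i,\hx)$ to $\E[\psi(X,\hx)]$ on a compact $\sethX$ implies uniform convergence, hence $\sup$ and $\lim$ commute. That implication is false in general (pointwise convergence of continuous functions on a compact set need not be uniform), so the paper's proof has a gap here too. Your plan---a uniform law of large numbers via a finite net plus oscillation control from boundedness and dominated convergence---is the standard and rigorous way to obtain $\sup_{\hx}|\gamma_k(\hx)-\E[\psi(X,\hx)]|\to 0$ a.s., after which $|C_k-c|$ is dominated by this supremum and bounded convergence gives $\E[C_k]\to c$. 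One small wording caution: boundedness of $\psi$ alone does not give equicontinuity of the family $\{\psi(x,\cdot)\}_x$; what you actually need (and what DCT delivers) is that $\E\bigl[\sup_{d(\hx,\hx_0)<\delta}|\psi(X,\hx)-\psi(X,\hx_0)|\bigr]\to 0$ as $\delta\to 0$ for each $\hx_0$, which is enough for the covering argument.
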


\begin{proof}
We prove each in turn:
\begin{enumerate}
    \item $\E[C_k] = \E[\sup_{\hx} \frac{1}{k} \sum_i \psi(X_i, \hx)] \geq \sup_{\hx} \E[\frac{1}{k} \sum_i \psi(X_i, \hx)] =  \sup_{\hx} \E[\psi(X, \hx)] = c$
    \item 
    First, note that $\E[C_1] \geq \E[C_k]$ since
    \[
   \E[C_1] = \E[\sup_{\hx} \psi(X_1, \hx)] = \E[\frac{1}{k} \sum_i \sup_{\hx}  \psi(X_i, \hx)] \geq  \E[\sup_{\hx} \frac{1}{k} \sum_i \psi(X_i, \hx)] = \E[C_k] 
    \]
    We therefore have
    \begin{align*}
        \E[C_{k+1}] &= \E[\sup_{\hx} \frac{1}{k+1} \sum_{i=1}^{k+1} \psi(X_i, \hx)] \\
        &=  \E[\sup_{\hx} \{ \frac{1}{k+1} \sum_{i=1}^k \psi(X_i, \hx) + \frac{1}{k+1} \psi(X_{k+1}, \hx)\}] \\
        & \leq \E[\sup_{\hx} \{ \frac{1}{k+1} \sum_{i=1}^k \psi(X_i, \hx)\} + \sup_{\hx} \{ \frac{1}{k+1} \psi(X_{k+1}, \hx)\}]\\
        &= \frac{k}{k+1}\E[C_k] + \frac{1}{k+1}\E[C_1] \\
        & \leq \E[ C_k]
    \end{align*}
    \item 
    Define the shorthand $M_k(\hx) := \frac{1}{k} \sum_{i=1}^k \psi(X_i, \hx), \Psi(\hx) = \E[\psi(X, \hx)]$.
    Denote the sup norm on the set of continuous bounded functions on $\sethX$ by $\|f(\cdot) \|_\infty := \sup_{\hx \in \sethX} |f(\hx)|$. Then it holds that
\begin{align*}
    | \sup_\hx M_k(\hx) - \sup_{\hx} \Psi(\hx) | &= | \sup_\hx |M_k(\hx)| - \sup_{\hx} |\Psi(\hx) || \\
    &= | \|M_k(\cdot)\|_\infty - \|\Psi(\cdot) \|_\infty| \\
    &\leq \| M_k(\cdot) - \Psi(\cdot) \|_\infty\\
    &= \sup_\hx | M_k(\hx) - \Psi(\hx) |,
\end{align*}
    where we made use of the fact that $\psi \geq 0$ in the first equality, and the reverse triangle inequality in the second-to-last step.
    By the uniform strong law of large numbers \citep{ferguson2017course}, 
    \[
    \lim_{k \to \infty} \sup_\hx | M_k(\hx) - \Psi(\hx) | = 0
    \]
    almost surely. Therefore, by the inequality we just showed, we also have
    \[
    \lim_{k \to \infty}  | \sup_\hx M_k(\hx) - \sup_{\hx} \Psi(\hx) |  \leq   \lim_{k \to \infty} \sup_\hx | M_k(\hx) - \Psi(\hx) | = 0  
    \]
    almost surely, i.e., 
    \[
    \lim_{k \to \infty} C_k = \lim_{k \to \infty} \sup_\hx M_k(\hx) = \sup_{\hx} \E[\psi(X, \hx)] := c
    \]
    almost surely.  Then it trivially follows that $C_k$ also converges to $c$ in probability, and that $\lim_{k\to \infty} \E[C_k] = c$ by the bounded convergence theorem.

\end{enumerate}
\end{proof}

\subsection{Detailed Lower Bound Method}
\label{app-sec:algo}

\paragraph{Approximations for stochastic gradient ascent.}
Recall the lower bound objective obtained by replacing the sup-partition function $c$ by the mean of the $k$-sample underestimator $C_k$:
\begin{align*}
    \ell_k (\theta) := \mathbb{E}[- \log u_\theta(X)] -  \log \mathbb{E}[C_k]
\end{align*}

Unfortunately, we still cannot apply stochastic gradient ascent to $\ell_k$, as two more difficulties remain, which we solve by further approximations. First, %
$C_k$ is still hard to compute, as it is defined  through a global maximization problem. %
We note that by restricting to a symmetric $\rho$ and $\sethX = \setX$ , the maximization objective of $C_k$ has the form of a reweighted kernel density estimate,
\[
\frac{1}{k} \sum_{i=1}^k \psi(x_i, \hx) = \frac{1}{k} \sum_{i=1}^k \frac{1}{u(x_i)} \exp(-\lambda \rho(x_i, \hx)),
\]
where each data point is reweighted by $\frac{1}{u(x)}$.
For a squared error distortion, this becomes a Gaussian mixture density, $\frac{1}{k} \sum_i \psi(x_i, \hx) \propto  \sum_i \pi_i \exp(-\lambda \|x_i - \hx\|^2)$, with centroids defined by the samples $x_1,...,x_k$, and mixture weights $\pi_i = (k u(x_i))^{-1}$. The global mode of a Gaussian mixture can generally be found by 
hill-climbing from each of the $k$ centroids, except in rare artificial examples \citep{carreira2000mode, carreira2020howmany}; we therefore use this procedure to compute $C_k$, but note that other methods exist \citep{lee2019finding, carreira2007gaussianmeanshift}.

The second difficulty is that even if we could estimate $\mathbb{E}[C_k]$ (with samples of $C_k$ computed by global optimization), the objective $\ell_k$ requires an estimate of its logarithm; a naive application of Jensen's inequality $-\log \mathbb{E}[C_k] \geq \mathbb{E}[- \log C_k]$ results in an \emph{over}-estimator (as does the IWAE estimator), whereas we require a lower bound. Following \citet{poole19variational},
we use the following basic lower bound of $-\log(x)$ by its linearization around some point $\alpha$, i.e., 
$$ - \log (x) \geq - x/ \alpha - \log(\alpha) +1, \forall x, \alpha >0$$
with equality achieved by $\alpha=x$.
This results in the final lower bound objective we optimize:
\begin{align}
    \tilde \ell_k (\theta) := \mathbb{E}[- \log u_\theta(X)] - \mathbb{E}[C_k] / \alpha - \log\alpha + 1.
\end{align}
Since the linear underestimator of $-\log\E[C_k]$ is tightest near $\alpha = \mathbb{E}[C_k]$, in our algorithm we set $\alpha$ adaptively to an estimate of $\mathbb{E}[C_k]$ (separately from the $\mathbb{E}[C_k]$ term we estimate in the objective function).

\paragraph{The algorithm.}

\begin{algorithm2e}[h]
  \SetKwComment{Comment}{// }{}
  \SetKwProg{subroutine}{Subroutine}{}{}
  \textbf{Requires:}
  $\lambda >0$, model $u_\theta$  (e.g., a  neural network) parameterized by $\theta$, batch sizes $k, m$, and gradient ascent step size $\epsilon$.\\ %
  \While{$\tilde \ell_k$ not converged}{
 \Comment{ Draw $2m$ samples of $C_k$; use the last $m$ samples to set $\alpha$, and the first $m$ samples to estimate $\E[C_k]$ in $\tilde \ell_k$.}
      \For{$j:=1$ \KwTo $2m$} {
        Draw $k$ data samples, $\{x_1^j, ..., x_k^j\}$ \\
       ${\hx}^j, C_k^j :=$ compute{\textunderscore}$C_k(\theta, \lambda, \{x_1, ..., x_k\})$ \\
      } %
    
     $\alpha := \frac{1}{m} \sum_{j=m+1}^{2m} C_k^j$ \\
     \Comment{Compute objective $\tilde \ell_k (\theta)$ and update $\theta$ by gradient ascent.}
        with GradientTape() as tape: \\
        \quad \quad $E :=\frac{1}{m} \sum_{j=1}^m \gamma_k({\hx}^j, \theta, \{x_1^j, ..., x_k^j\} )$ \Comment{Estimate of $\mathbb{E}[C_k]$} 
        \quad \quad $\tilde \ell_k := - \frac{1}{m} \sum_{j=1}^m \frac{1}{k} \sum_{i=1}^k \log u_\theta(x_i^j) - \frac{1}{\alpha} E  - \log \alpha + 1$\\
        \quad \quad  gradient $:=$  tape.gradient($\tilde \ell_k$, $\theta$)\\
        \quad $\theta := \theta + \epsilon $ gradient\\

  }
 
    \subroutine{ \rm$\gamma_k(\hx, \theta, \lambda, \{x_1, ..., x_k\})$ :
  }
  {
    \Comment{Evaluate the global maximization objective at $\hx$.}
    Return $\frac{1}{k} \sum_{i=1}^k \exp\{ - \lambda \rho(x_i, \hx) \} /  u_{\theta}(x_i)$
    \label{subroutine:gmm-obj}
  }

  \subroutine{\rm compute{\textunderscore}$C_k(\theta, \lambda, \{x_1, ..., x_k\})$ :
  }
  {
    \Comment{Compute the global optimum of $\gamma_k(\hx)$, assuming squared distortion $\rho(x, \hx) \propto \|x - \hx\|^2$. }
    opt{\textunderscore}loss $:= - \infty$\\
    \For{$i := 1$ \KwTo $k$} {
    \Comment{Run gradient ascent from the $i$th mixture component.}
    $\hx := x_i$\\
    \While{$\hx$ not converged} {
        with GradientTape() as tape: \\
        \quad \quad loss := $\gamma_k(\hx, \theta, \lambda, \{x_1, ..., x_k\})$\\
        \quad \quad gradient $:=$ tape.gradient(loss, $\hx$)\\
        \quad \quad $\hx := \hx + \epsilon $ gradient\\
        }
        \If{  loss $>$ opt{\textunderscore}loss } {
        opt{\textunderscore}loss $:= $ loss \\
        $\hx^* := \hx$
        }
    }
    $C_k$ = opt{\textunderscore}loss \\
    return $(\hx^*, C_k)$
    \label{subroutine:estimate_constraint_exact}
  }

  \caption{Example implementation of the proposed  algorithm for estimating rate-distortion lower bound $R_L(D)$.}
  \label{algo:lb}
\end{algorithm2e}

We give a pseudo-code implementation of the algorithm in Algorithm~\ref{algo:lb}. The code largely follows Python semantics, and assumes an auto-differentiation package is available, such as GradientTape as provided in \texttt{Tensorflow}.  We use $\gamma_k$ to denote the function in the supremem definition of $C_k$ , i.e., $\gamma_k(\hx) := \frac{1}{k} \sum_{i=1}^k  \exp\{ - \lambda \rho(x_i, {\hx}) \} /  u_{\theta}(x_i)$.

For a given $\lambda > 0$ and model $u_\theta: x \to \mathbb{R}^+$, the lower bound algorithm works by (stochastic) gradient ascent on the objective $\tilde{\ell}_k(\theta)$ (Eq.~\ref{eq:lb_objective}) w.r.t. the model parameters $\theta$.
To compute the gradient, we need an estimate of $\mathbb{E}[C_k]$ (ultimately $\log \mathbb{E}[C_k]$) as part of the objective, which requires us to draw $m$ samples of $C_k$. In the pseudo-code, a separate set of $m$ samples of $C_k$ are also drawn to set the linear expansion point $\alpha$. 
In our actual implementation, we set $\alpha$ to an  exponential moving average of $\mathbb{E}[C_k]$ estimated from previous iterations (e.g., replacing line 7 of the pseudo-code by $\alpha:= 0.2 \alpha + 0.8 E$), so only $m$ samples of $C_k$ need to be drawn for each iteration of the algorithm. We did not find this approximation to significantly affect the training or results.

Recall $C_k(\theta)$ is defined by the result of maximizing w.r.t. $\hx$.
When computing the gradient with respect to $\theta$, we differentiate through the maximization operation by evaluating $C_k = \gamma_k(\hx^*)$ on the forward pass, using the argmax $\hx^*$ found by the subroutine compute{\textunderscore}$C_k$. This is justified by appealing to a standard envelope theorem \citep{milgrom2002envelope}.

By Lemma~\ref{lemma:RD-lagrangian-characterization}, each $u_\theta^*$ trained with a given value of $\lambda$ yields a linear under-estimator of $R(D)$ ,
\[
R_L^\lambda(D) = -\lambda D + \tilde \ell_k(\theta^*).
\]
We obtain the final $R_L(D)$ lower bound by taking the upper envelope of all such lines, i.e., 

\[
R_L(D) := \max_{\lambda \in \Lambda} R_L^\lambda(D),
\]
where $\Lambda$ is the set of $\lambda$ values we trained with.

Tips and tricks:

To avoid numerical issues, we always parameterize $\log u$, and all the operations involving $C_k$ and $\alpha$ are performed in the log domain; e.g., we optimize $\log \gamma_k$ instead of $\gamma_k$  (which does not affect the result since $\log$ is monotonically increasing), and use $\operatorname{logsumexp}$ when taking the average of multiple $C_k$ samples in the log domain.

During training, we only run an approximate version of the global optimization subroutine compute{\textunderscore}$C_k$ for efficiency, as follows. Instead of hill-climbing from each of the $k$ mixture component centroids, we only do so from the $t$ highest-valued centroids under the objective $\gamma_k$, for a small number of $t$ (say 10).
This approximation is not used when reporting the final results.

\subsection{Additional Experimental Details and Results}\label{app:additional-experiments}

\subsubsection{Computational Aspects}

Our methods are implemented using the \texttt{Tensorflow} library. Our experiments with learned image compression methods additionally used the \texttt{tensorflow-compression}\footnote{\url{https://github.com/tensorflow/compression}} library.
Our experiments on images were run on Titan RTX GPUs, while the rest of the experiments were run on  Intel(R) Xeon(R) CPUs.
We used the Adam optimizer for gradient based optimization in all our experiments, typically setting the learning rate between $1e-4$ and $5e-4$.
Training the $\beta$-VAEs for the upper bounds  required from a few thousand gradient steps on the lower-dimension problems (under an hour), to a few million gradient steps on the image compression problem (a couple of days; similar to reported in \citet{minnen2020channel}). With our approximate mode finding procedure (starting hill climbing only from a small number of centroids, see Sec.~\ref{app-sec:algo}), 
the lower bound models required less than 10000 gradient steps to converge in all our experiments.

\subsubsection{Gaussians}\label{app-sec:gaussian-experiments}
\paragraph{Data}
Here the data source is an $n$-dimensional Gaussian distribution with a diagonal covariance matrix, whose parameters are generated randomly. We sample each dimension of the mean uniformly randomly from $[-0.5, 0.5]$ and variance from $[0, 2]$.
The ground-truth R-D function of the source is computed analytically by the reverse water-filling algorithm \citep{cover2006elements}.

\paragraph{Upper Bound Experiments}
For the $\mathcal{Z} = \sethX$ (no decoder) experiment,
we chose $Q_{\hX}$ and $\Qyx$ to be factorized Gaussians; we let the mean and variance vectors of $Q_{\hX}$ be trainable parameters, and predict the mean and variance of $Q_{\hX|X=x}$ by one fully connected layer with $2n$ output units, using softplus activation for the $n$ variance components.

For our experiments involving a decoder, we parameterize the variational distributions $Q_Z$ and $Q_{Z|X}$ similarly to before, and use an MLP decoder with one hidden layer (with the number of units equal the data dimension $n$, and leaky ReLU activation) to map from the $\mathcal{Z}$ to $\sethX$ space.
We observe the best performance with a linear (or identity) decoder and a simple linear encoder; using deeper networks with nonlinear activation required more training iterations for SGD to converge, and often to a poorer upper bound. 
In fact, for a factorized Gaussian source, it can be shown analytically that the optional $Q^*_{Y|X=x}$ is a Gaussian whose mean depends linearly on $x$, and an identity (no) decoder is optimal.

\paragraph{Lower Bound Experiments}

\begin{figure}
    \centering
        \centering
        \includegraphics[width=0.5\linewidth]{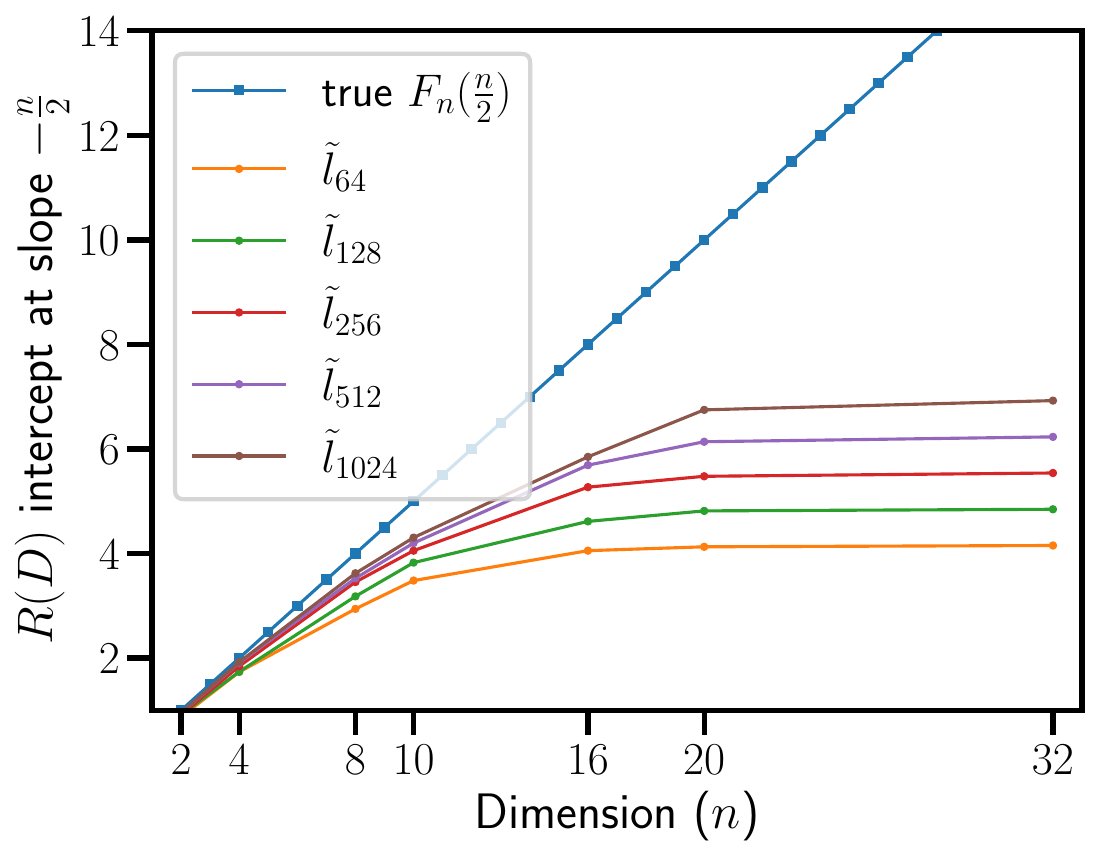}
  \caption{$R$-axis intercept estimates at (negative) slope $\lambda = \frac{n}{2}$ from our lower bound algorithm, trained with increasing $n$ and $k$. The $R$-axis intercept of an optimally tight linear lower bound, $F_n(\frac{n}{2})$, increases linearly as a function of $n$; however, we see that the best achievable $R$-axis intercept achieved with a particular setting of $k$, $\tilde \ell_k$, plateaus out at $\log_k$ as $n$ increases.
  } \label{fig:gaussian-LB-F-vs-n}
\end{figure}

In our lower bound experiments, we parameterize $\log u$ by an MLP with 2 hidden layers with $20n$ hidden units each and SeLU activation, where $n$ is the dimension of the Gaussian source. We fixed $k=1024$ for the results with varying $n$ in Figure \ref{fig:rd-gaussian-main}-\emph{bottom}. We vary both $k$ and $n$ in the additional experiment below.

To investigate the necessary $k$ needed to achieve a tight lower bound, and its relation to $\lambda$ or the Gaussian source dimension $n$, we ran a series of experiments with $k$ ranging from 64 to 1024 on increasingly high dimensional standard Gaussian sources, each time setting $\lambda = \frac{n}{2}$. For an $n$-dimensional standard Gaussian, the true $R$-intercept, $F_n(\lambda)$, has an analytical form; in particular, $F_n(\frac{n}{2}) = \frac{n}{2}$.  In Figure~\ref{fig:gaussian-LB-F-vs-n}, we plot the final objective estimate, $\tilde \ell_k$, using the converged MLP model, one for each $k$ and $n$. As we can see, the maximum achievable $\tilde \ell_k$ plateaus to the value $\log k$ as we increase $n$, and for sufficiently high dimension (e.g., $n=20$ here), doubling $k$ only brings out a constant ($\log 2$) improvement in the objective. 
This phenomenon relates to the over-estimation bias of $C_k$ when $k$ is too low compared to $\lambda$ or the ``spread'' of the data, and can be understood as follows.

For a given sample size $k$, there is a ``catestrophic'' regime produced by increasingly large $\lambda$  (corresponding to an exceedingly narrow Gaussian kernel), or increasingly high (intrinsic) dimension of the data, so that the $k$ data samples appear very far apart.
Numerically, this is exhibited by very quick termination of the $k$ hill-climbing runs when computing $C_k$ (subroutine compute\textunderscore$C_k$ in Algorithm.~\ref{algo:lb}), since the mixture components are well separated, and the mixture centroids are nearly stationary points of the mixture density.  
The value of the mixture density at each  centroid $x_i$ can be approximated as $\gamma_k(x_i) = \frac{1}{k}  \frac{1}{u(x_i)} +  \frac{1}{k} \sum_{j\neq i}  \exp\{ - \lambda \rho(x_j, x_i) \} /  u_{\theta}(x_j) \approx \frac{1}{k u(x_i)}$ , since $\exp\{ - \lambda \rho(x_j, x_i) \} \approx 0 $ for all $j\neq i$.
The maximization problem defining $C_k$ then essentially reduces to checking which mixture component is the highest, and returning the corresponding centroid (or a point very close to it), so that
$C_k \approx \sup_{i=1,...,k} \frac{1}{k u(x_i)}$. This implies
\[
\mathbb{E}[C_k] \approx \mathbb{E}[\sup_{i=1,...,k} \frac{1}{k u(X_i)}] \leq \mathbb{E}[\sup_{x} \frac{1}{k u(x)} ] = \sup_{x} \frac{1}{k u(x)},
\]
therefore
\begin{align}
\ell_k & := \mathbb{E}[-\log u(X) ] - \log \mathbb{E}[C_k] \\
& \leq \mathbb{E}[-\log u(X) ] + \log k + \inf_{x} \log  u(x) \\
& = \log k + \mathbb{E}[-\log u(X) + \inf_{x'} \log  u(x')]  \leq \log k.
\end{align}
Thus, when these approximations hold, the maximum achievable lower bound objective $\tilde{\ell}_k $ cannot exceed $\log k$. 
On sources such as high-dimension (e.g., $n = 10000$) Gaussians, or $256 \times 256$ patches of natural images,  $\lambda$ needs to be on the order of $10^6$ or higher to target even the low-distortion region of the R-D curve, and the above analysis well describes the behavior of the lower bound algorithm for any value of $k$ we feasibly experimented with. %

\subsubsection{Particle Physics}
\paragraph{Data}
We borrowed the $Z \to e^+ e^-$ dataset from \citet{howard2021foundations}, containing 331699 data observations. Each $n=16$-dimensional data vector corresponds to an independent $Z$-boson decay event, and is the concatenation of the four-momenta of an electron ($e^-$) and positron ($e^+$) both in the theoretical prior space and observed space.
The 2D marginal distribution (where we additionally compare with the BA algorithm) is formed by taking the two data coordinates with the least absolute value of correlation. A histogram is given in Figure~\ref{fig:hist2d-physics}.

Unlike in the Gaussian experiments, where we generate new random samples from the source on the fly for training and evaluation, here the true underlying source is only approximated by a finite number of samples in the given dataset. Since our R-D upper/lower bounds have the form of expectations of learned functions w.r.t. the \emph{true} underlying source (explained in more detail in Section.~\ref{app:variance-info}), we create a separate held-out test set of 10000 samples for our final estimates of R-D bounds using the trained models.
We also use a small subset of the training data as a validation set for model selection, typically using the most expressive model we can afford without overfitting.  
We use the same train/test split and the same model architectures on the 2D marginal data as on the $n=16$ data.

\paragraph{Upper Bound Experiments}
For our R-D upper bound model, we use a VAE with a MLP decoder, a normalizing flow ``prior'' $Q_Z$, and an MLP encoder computing a factorized Gaussian $\Qzx$. We set the latent space to have the same dimensionality $n$ as the data. The encoder and decoder MLPs have three layers with $[500, 500, l]$ units in each, where $l=n$ for the deocder MLP and $l=2n$ for the encoder MLP (as it computes both a mean and a variance vector for $\Qzx$); we use softplus activation in each hidden layer. $Q_Z$ is parameterized as a base Gaussian distribution transformed by three stacks of MAF \citep{papamakarios2017masked}.

For the Nonlinear Transform Coding (NTC) \citep{balle2021ntc} model, we use a VAE with a similar 500-500-$n$ MLP architecture for the encoder and decoder (thus  the latent space has the same dimensionality $n$ as the data), but modify the variational distributions to simulate quantization and end-to-end compression. As in \citep{balle2018hyper, balle2021ntc}, we let  $\Qzx$ be a factorized uniform distribution with unit width, and $Q_Z$ be a factorized neural density model convolved with a uniform noise.

For the BA algorithm on the 2D marginal, we obtain discretized alphabets and source probabilities as follows. First, we determine the smallest box set $S$ in $\mathbb{R}^2$ containing all the data samples (simply the Cartesian product of sample ranges along the x and y axes); next, we finely tile up $S$ using small rectangle bins, and compute a histogram of data samples over them.
We then let the source and reproduction alphabets to be the bin centers, and set the source distribution to be the frequency of samples in each bin. Finally, for computation efficiency, we remove the bin centers from the source alphabet associated with 0 samples; this does not affect the results.

We ran a maximum of 2000 steps of the BA algorithm on the test set, terminating early if the objective has not improved by more than 0.0001\% in consecutive steps.

\paragraph{Lower Bound Experiments}
We parameterize our lower bound model $\log u$ by an MLP with 3 hidden layers with 200 hidden units each and SeLU activation. We fixed $k=2048$.

\begin{figure}
    \centering
    \begin{subfigure}{0.45\textwidth}
        \centering
        \includegraphics[width=1\linewidth]{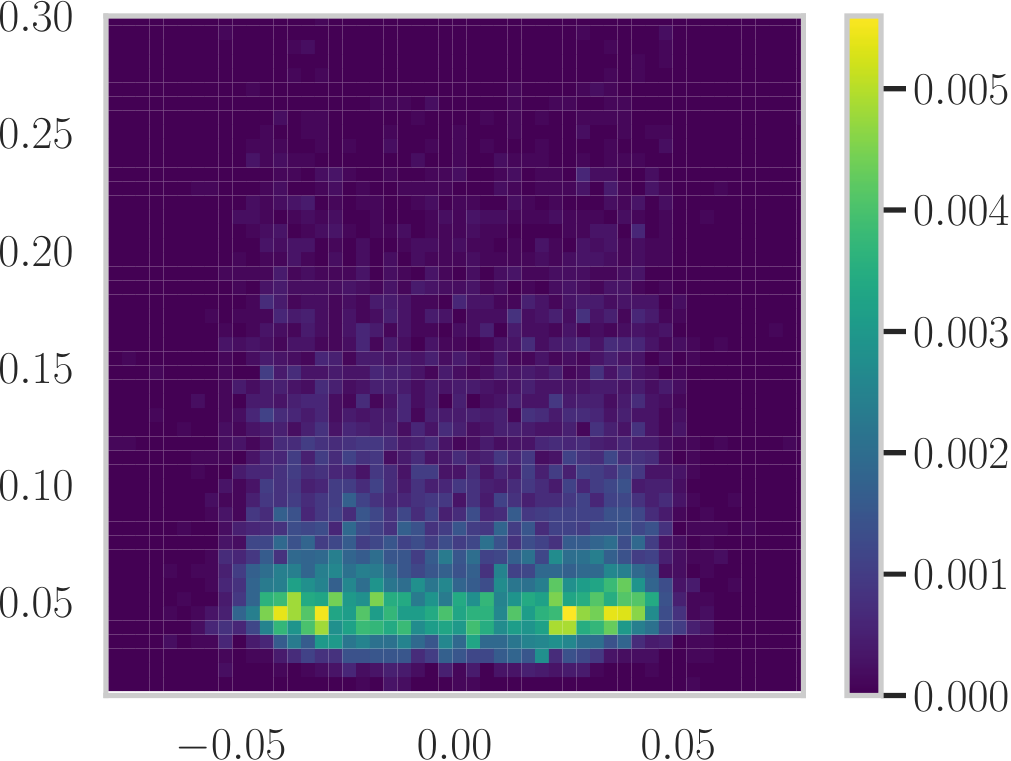}\caption{}\label{fig:hist2d-physics}
    \end{subfigure}\hfill
    \begin{subfigure}{0.45\textwidth}
    \centering
  \includegraphics[width=1.0\linewidth]{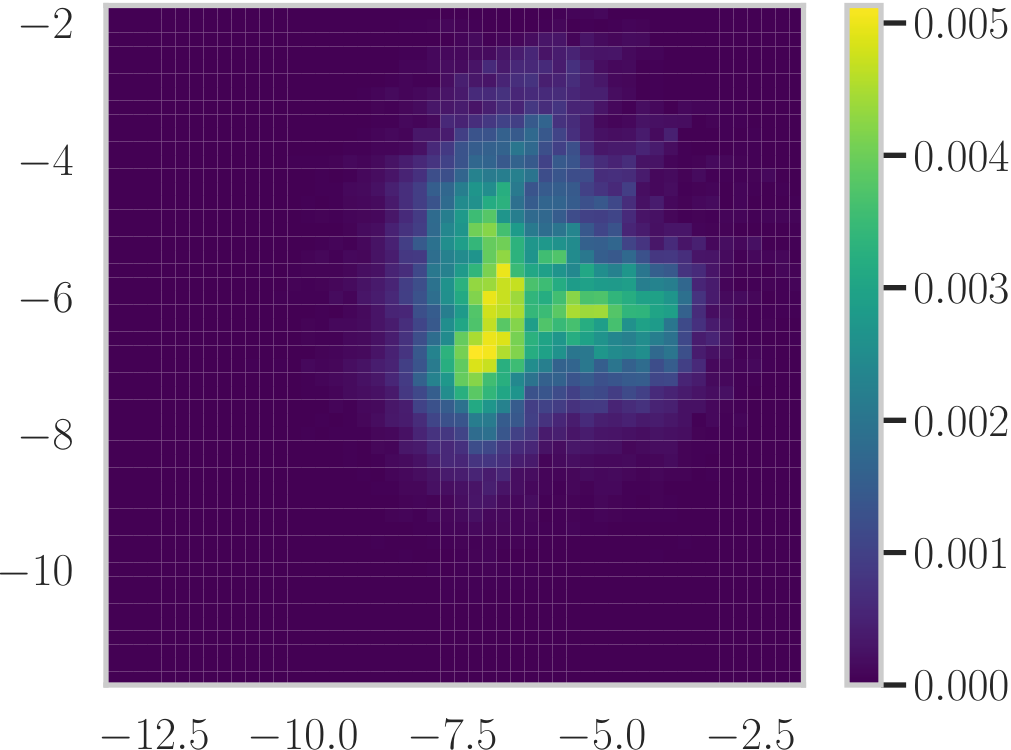}
  \caption{}\label{fig:hist2d-speech}
    \end{subfigure}
    \caption{Normalized histograms of the 2D marginals of the particle physics (\ref{fig:hist2d-physics}) and speech (\ref{fig:hist2d-speech}) datasets. The quantization bins and the associated empirical PMFs are also used define the discretized 2D source distributions on which the BA algorithm is run.} \label{fig:2D-marginals-physics-speech}
\end{figure}

\subsubsection{Speech}
\paragraph{Data}
We use the Free Spoken Digit Dataset \citep{fsdd} consisting of recordings of spoken digits from `zero' to `nine' at a sampling rate of 8kHz. We use a subset of  the recordings with a randomly chosen speaker id (`theo'), and keep the original  train-test split \footnote{\url{https://github.com/Jakobovski/free-spoken-digit-dataset/\#usage}} consisting of 450 and 50 audio files for the train and test sets, respectively. From each audio file, we extracted Short-time Fourier Transform (STFT) coefficients using Tensorflow's \texttt{tf.signal.stft} routine with a 63-sample FFT window  (\texttt{frame\_length=63}) in 1-sample steps (\texttt{frame\_step=1}), took the magnitude and converted to log domain, a common pre-processing step for computing spectrograms and Mel-frequency cepstral coefficients.
The resulting $n=33$-dimensional Fourier feature vectors are then shuffled across time steps and treated as i.i.d. samples in our experiments, with over 1 million samples in the training set and 120000 samples in the test set.
We follow the same procedure as in the physics experiment, creating a 2D marginal from the two coordinates with the least absolute correlation; a histogram is given in Figure ~\ref{fig:hist2d-speech}.

\paragraph{Experiments}
We use the same model architectures and experimental procedures/hypermarameters as in the physics experiment.

\subsubsection{Banana-shaped Source}
\paragraph{Data}
The 2D banana-shaped source \citep{balle2021ntc}  is a RealNVP transform of a 2D Gaussian \footnote{Source code taken from the authors' GitHub repo \url{https://github.com/tensorflow/compression/blob/66228f0faf9f500ffba9a99d5f3ad97689595ef8/models/toy_sources/toy_sources.ipynb}.}. See Figure \ref{fig:banana-density} for a plot of its density function.

We embed the banana source in a higher dimension $n$ by multiplying with a randomly sampled $n \times 2$ matrix. This is done by simply passing samples of the 2D banana source through a linear MLP layer with $n$ output units and no activation, with its weight matrix randomly drawn from a Gaussian by the Glorot initialization procedure.

\paragraph{Upper Bound Experiments}

\begin{figure}
    \centering
    \begin{minipage}{0.32\textwidth}
        \centering
        \includegraphics[width=1\linewidth]{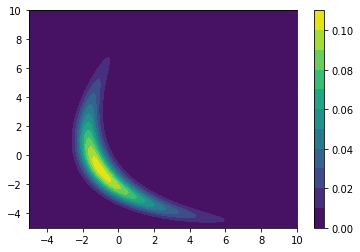}
  \caption{Density plot of the 2D banana-shaped distribution from \citet{balle2021ntc}.} \label{fig:banana-density}
    \end{minipage}\hfill
    \begin{minipage}{0.32\textwidth}
        \centering
        \includegraphics[width=1\linewidth]{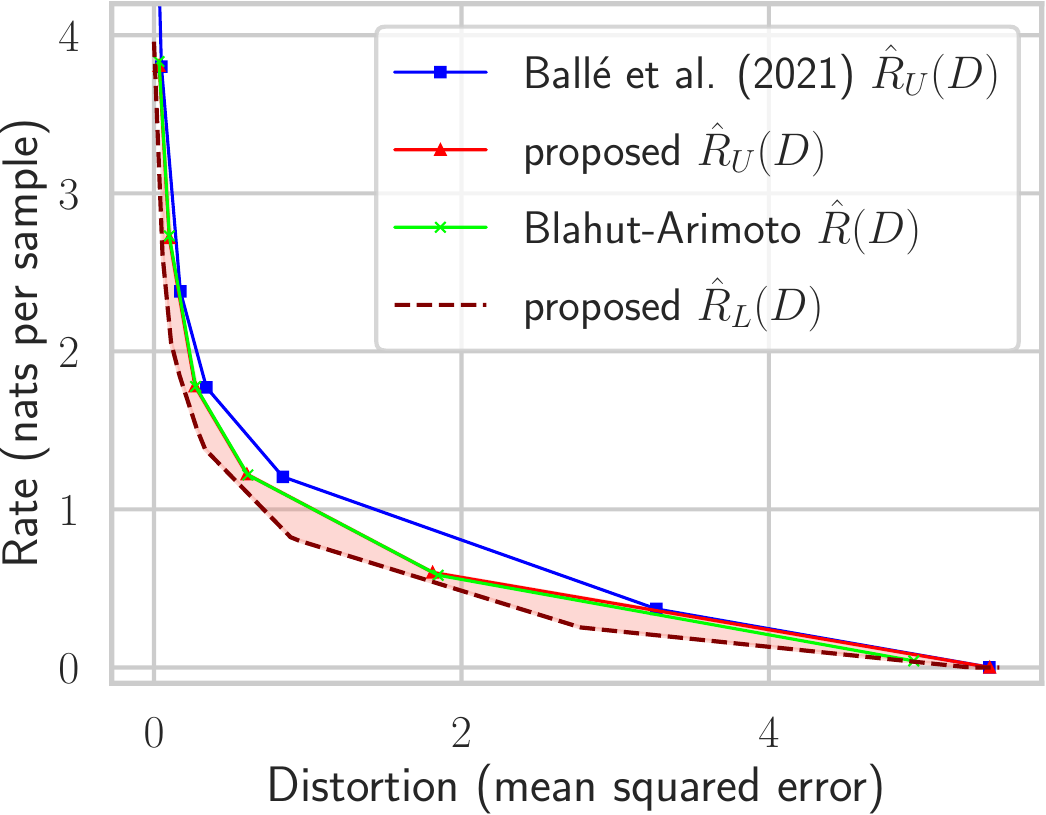}
  \caption{R-D sandwich bounds and the R-D performance of non-linear transform coding \citep{balle2021ntc} on the banana-shaped source. For reference, the BA algorithm is also run on a fine discretization of the source.} %
  \label{fig:rd-banana-n=2}
    \end{minipage}\hfill
    \begin{minipage}{0.32\textwidth}
    \centering
  \includegraphics[width=1.0\linewidth]{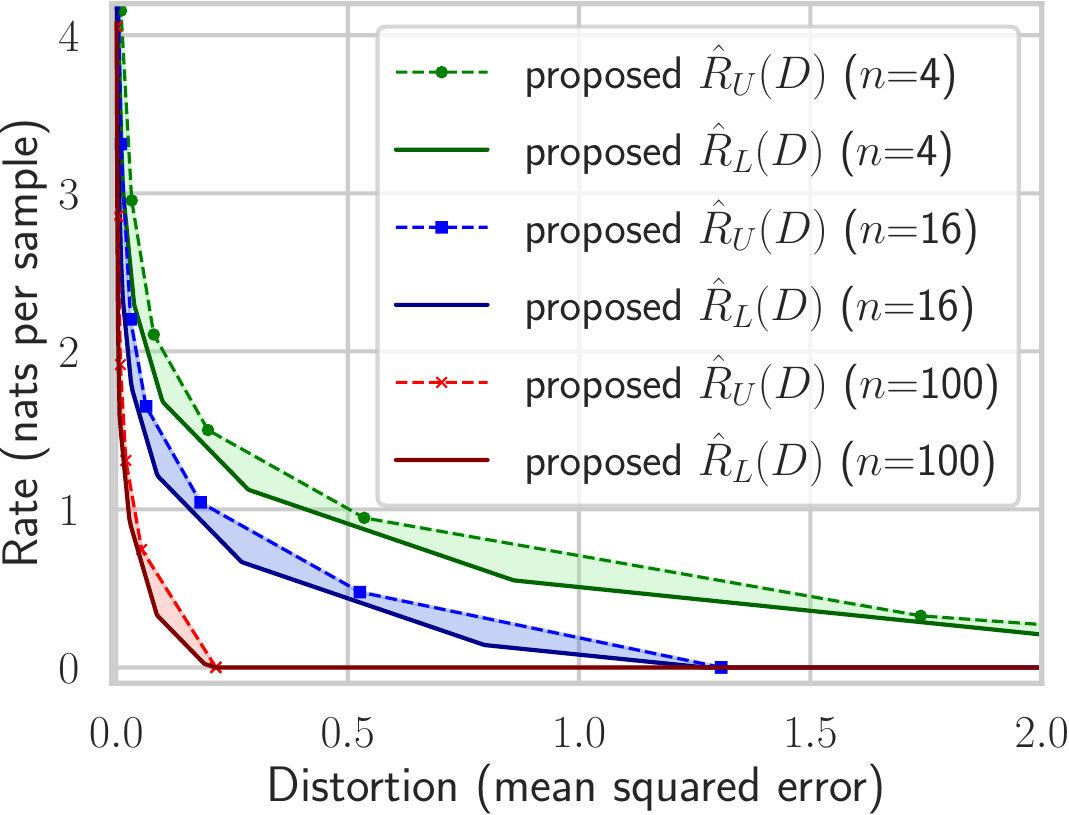} \caption{R-D sandwich bounds on higher-dimension random projections of the banana-shaped source. The tightness of our bounds appear unaffected by the increasing $n$.} \label{fig:rd-banana-n=higher}
    \end{minipage}
\end{figure}

We reproduced the NTC experiment using the same compressive autoencoder as in \citet{balle2021ntc}, with a 2-dimensional latent space and two-layer MLPs for both the encoder and decoder, with 100 hidden units and softplus activation in each hidden layer. For our upper bound model, we use the same MLP architecture as in \citet{balle2021ntc}, but parameterize $Q_Z$ by a MAF \citep{papamakarios2017masked} and $\Qzx$ by a factorized Gaussian (doubling the number of output of the encoder), similar to on previous experiments.

For the higher-dimension embeddings of the banana-shaped source, we use the same $\beta$-VAE as on the 2D source, but set the number of hidden units in each hidden layer to $100n$, where $n$ is the dimension of the source. We cap the number of hidden units per layer at 2000 for $n > 20$, and do not find this to adversely affect the results.

\paragraph{Lower Bound Experiments}
We use an MLP with three hidden layers and SeLU activations for the $\log u$ model; as in the upper bound models, here we set the number of hidden units in each layer to be $100n$, and cap it at 1000. In all the experiments we set $k=2048$.

As illustrated by Figure \ref{fig:rd-banana-n=higher}, the tightness of our sandwich bounds do not appear to be affected by the dimension $n$ of the data (we verified this up to $n=1000$), unlike on the Gaussian experiment where for a fixed $k$ the lower bound became increasingly loose with higher $n$.

\subsubsection{GAN-generated Images}
\paragraph{Data}

We adopt the same setup as in the GAN experiment by \citet{pope2021intrinsic}. We use a BigGAN \citep{brock2019large} pretrained on ImageNet at $128 \times 128$ resolution (so that $n=128\times128\times3 = 49152$), downloaded from \url{https://tfhub.dev/deepmind/biggan-deep-128/1}. 
To generate an image of an ImageNet category, we provide the corresponding one-hot class vector as well as a noise vector to the GAN. Following \citet{poole19variational}, we control the intrinsic dimension $d$ of the generated images by setting all except the first $d$ dimensions of the $128$-dimension noise vector to zero, and use a truncation level of $0.8$ for the sample diversity factor. Since the GAN generates values in $[-1, 1]$, we rescale them linearly to $[0, 1]$ to correspond to images, so that the alphabets $\setX = \sethX = [0, 1]^n$.
As in \citep{pope2021intrinsic}, we experimented with images from the ImageNet class \texttt{basenji}.
In Figure \ref{fig:more-2d-basenji-samples} and \ref{fig:more-4d-basenji-samples}, we plot additional random samples, for $d=2$ and $d=4$.

\renewcommand{\subfigwidth}{0.16}

\begin{figure}
    \centering
    \begin{tabular}{ccccc}
    \includegraphics[width=\subfigwidth\linewidth]{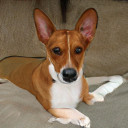} & \includegraphics[width=\subfigwidth\linewidth]{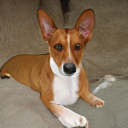} & \includegraphics[width=\subfigwidth\linewidth]{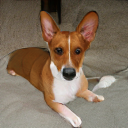} & \includegraphics[width=\subfigwidth\linewidth]{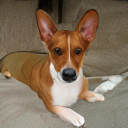} & \includegraphics[width=\subfigwidth\linewidth]{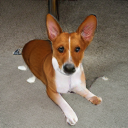} 
    \end{tabular}
    \caption{Random samples of \texttt{basenji} from a BigGAN, $d=2$.}
    \label{fig:more-2d-basenji-samples}
    
    \centering
    \begin{tabular}{ccccc}
    \includegraphics[width=\subfigwidth\linewidth]{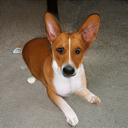} & \includegraphics[width=\subfigwidth\linewidth]{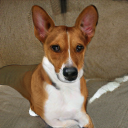} & \includegraphics[width=\subfigwidth\linewidth]{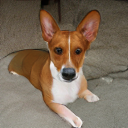} & \includegraphics[width=\subfigwidth\linewidth]{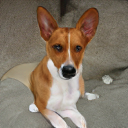} & \includegraphics[width=\subfigwidth\linewidth]{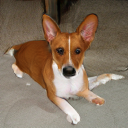} 
    \end{tabular}
    \caption{Random samples of \texttt{basenji} from a BigGAN, $d=4$.}
    \label{fig:more-4d-basenji-samples}
\end{figure}

\paragraph{Upper Bound Experiments}
We implemented a version of ResNet-VAE following the appendix of \citet{kingma2016improved}, using the $3\times 3$ convolutions of \citet{cheng2020learned} in the encoder and decoder of our model. 
Our model consists of 6 layers of latent variables, and implements bidirectional inference using factorized Gaussian variational distributions at each stochastic layer. 
During an inference pass, an input image goes through 6 stages of convolution followed by downsampling (each time reducing the height and width by 2), and results in a stochastic tensor at each stage. In encoding order, the stochastic tensors have decreasing spatial extent and an increasing number of channels equal to 4, 8, 16, 32, 64, and 128.
The latent tensor $Z_0$ at the topmost generative hierarchy is flattened and modeled by a MAF prior $Q_{Z_0}$ (aided by the fact that all the images have a fixed shape).

\paragraph{Lower Bound Experiments}
We parameterize the $\log u$ model by a simple feedforward convolutional neural network. It contains three convolutional layers with 4, 8, and 16 filters, each time downsampling by 2,  followed by a fully connected layer with 1024 hidden units. Again we use SeLU activation in the hidden units, and set $k=2048$ in all the experiments.

\subsubsection{Natural Images}
\paragraph{Data}
For signals such as images or video, which can have arbitrarily high spatial dimension, it is not immediately clear how to define i.i.d. samples. But since our focus is on image compression, we follow the common practice of training convolutional autoencoders on random 256$\times$256-pixel crops of high resolution images as in learned image compression research \citep{balle2017end}, noting that current methods cannot effectively exploit spatial redundancies larger than this scale \citep{minnen2020channel}. %
We can then use the trained models to estimate R-D upper bounds on the image-generating distributions underlying standard benchmark datasets such as Kodak \citet{kodak} and Tecnick \citet{tecnick}.

As a representative dataset of natural images, we used images from the COCO 2017 \citep{lin2014microsoftcoco} training set that are larger than 512$\times$512, and downsampled each by a random factor in $[0.6, 1]$ to remove potential compression artifacts from JPEG. 
We trained image compression models from \citep{minnen2018joint, minnen2020channel} on our dataset and were able to closely reproduce their performance.

\paragraph{Upper Bound Experiments}

\begin{figure}
    \centering
    \includegraphics[width=0.6\linewidth]{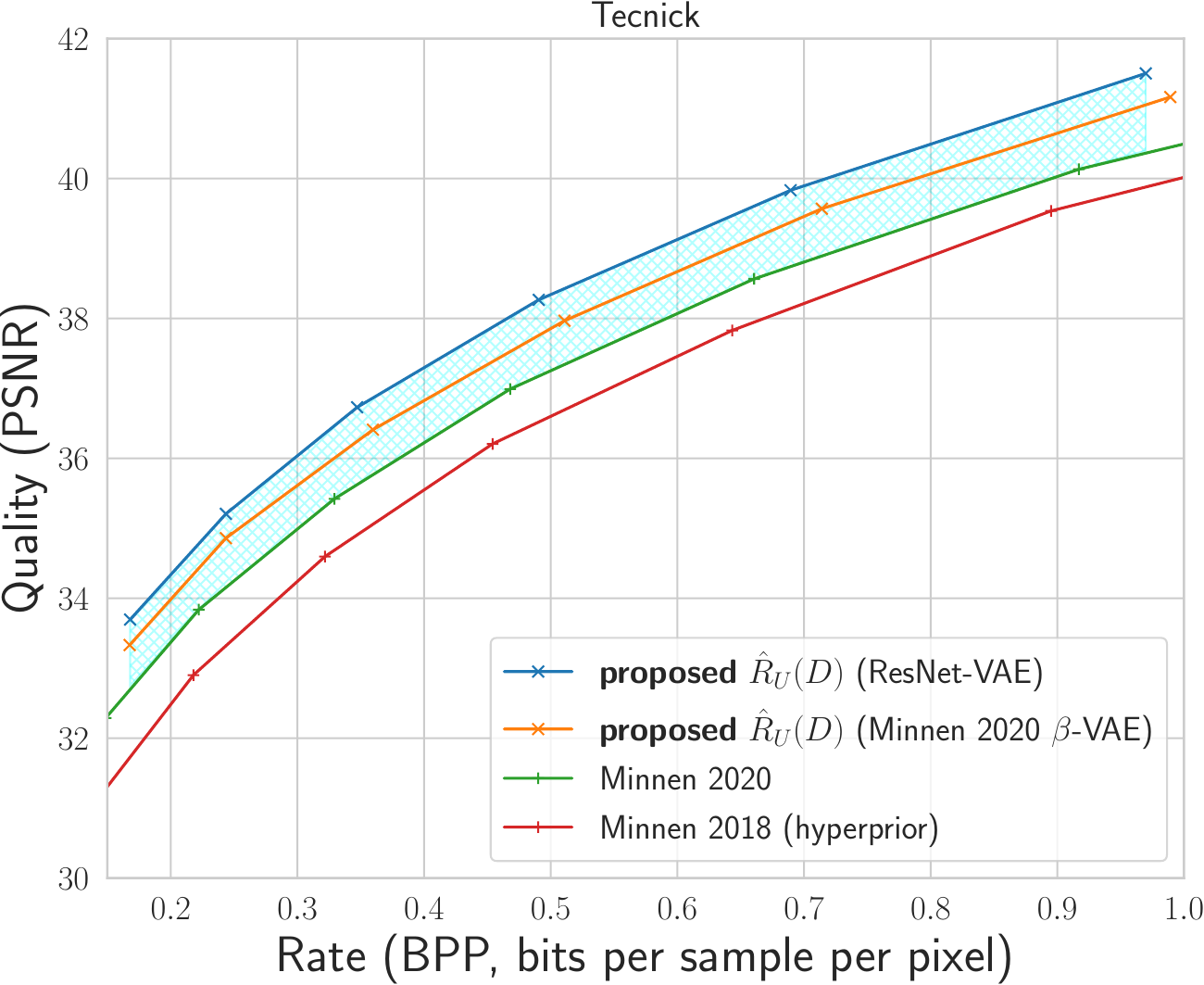}
    \caption{Quality-rate curves on the Tecnick dataset.}
    \label{fig:tecnick}
\end{figure}

Instead of working with variational distributions over the set of pixel values $\sethX=\setX=\{0, 1, ..., 255\}^n$,
which would require specialized tools for representing and optimizing high-dimensional discrete distributions \citep{salimans2017pixelcnn++, hoogeboom2019integer, maddison2016concrete, jang2016categorical}, here we parameterize the variational distributions in Euclidean latent spaces for simplicity (see Sec.~\ref{sec:ub-via-beta-vae}).

We applied our upper bound algorithm based on two VAE architectures:
\begin{enumerate}
    \item The learned image compression model from \cite{minnen2020channel} \footnote{\url{https://github.com/tensorflow/compression/blob/f9edc949fa58381ffafa5aa8cb37dc8c3ce50e7f/models/ms2020.py}}. This corresponds to a hierarchical VAE with two layers of latent variables, with the lower level latents modeled by a channel-wise autoregressive prior conditioned on the top-level latents. 
    We adapt it for our purpose, by using factorized Gaussians for the variational posterior distributions $\Qzx$ (instead of factorized uniform distributions used to simulate rounding in the original model), and no longer convolving the variational prior $Q_Z$ with a uniform distribution as in the original model.
    \item The ResNet-VAE from our experiment on GAN images. Following learned image compression models, we maintain the convolutional structure of the topmost latent tensor (rather than flattening it in the GAN experiments) in order to make the model  applicable to images of variable sizes. Instead of a MAF, we model the topmost latent tensor by a deep factorized (hyper-)prior $Q_{Z_0}$ as proposed by \citet{balle2018hyper}.
\end{enumerate}

For the $\beta$-VAE based on \citet{minnen2020channel}, $\dim(\setZ) \approx 0.28 \dim(\setX)$; and for our $\beta$-ResNet-VAE, $\dim(\setZ) \approx 0.66 \dim(\setX)$. We did not observe improved results by simply increasing the number of latent channels/dimensions in the former model. We follow the same model training practice as reported by \citet{minnen2020channel}, training both models to around 5 million gradient steps. 

In our problem setup, we consider the reproduction alphabet to be the discrete set of pixel values $\setX=\{0, 1, ..., 255\}^n$, so the decoder network $\omega$ needs to discretize the output of the final convolutional layer (in \texttt{Tensorflow}, this can be implemented by a \texttt{saturate}\textunderscore\texttt{cast} to  \texttt{uint8}). However, since the discretization operation has no useful gradient, we skip it during training, and only apply it for evaluation.

The operational quality-rate curves of various compression methods are taken from the \texttt{tensorflow-compression}\footnote{\url{https://github.com/tensorflow/compression/tree/8692f3c1938b18f123dbd6e302503a23ce75330c/results/image_compression}} and \texttt{compressAI}\footnote{\url{https://github.com/InterDigitalInc/CompressAI/tree/baca8e9ff070c9f712bf4206b8f2da942a0e3dfe/results}} libraries. 
The results of \citet{minnen2020channel} represent the current state-of-the-art of neural image compression,  to the best of our knowledge.

\paragraph{Lower Bound Experiments}
When running our lower bound algorithm on $256\times 256$ crops of natural images,
we observed similar behavior as on high-dimension Gaussians, with the loss $\tilde \ell_k$ quickly plateauing to a value around $\log k$. This issue persisted when we tried different $u$ model architectures. See a discussion on this phenomenon in Section \ref{app-sec:gaussian-experiments}.

\subsection{Measures of Variability}\label{app:variance-info}
As alluded to in Sec.~\ref{sec:ub-method}, the proposed R-D bounds are estimated as empirical averages from samples, which only equal the true quantities in expectation. Before we characterize the variability of the bound estimates, we first define the exact form of the estimators used.

\paragraph{Upper Bound Estimator}
For the proposed upper bound, consider the variational problem solved by the Lagrangian Eq.~\ref{eq:RD-betavae-variational-lagrangian}, and define the random variables from the rate and distortion terms of the Lagrangian:
\[
\mathcal{R}(Z,X) := \log q(Z|X) - \log q(Z),
\]
and
\[
\mathcal{D}(Z,X) :=  \rho(X, \omega(Z)).
\]
Above, $X$ and $Z$ follow the joint distribution $P_X Q_{Z|X}$, $\omega$ is the decoder function, and $q(z|x)$ and $q(z)$ are the Lebesgue density functions of absolutely continuous $Q_{Z|X=x}$ and $Q_Z$ variational distributions (we only worked with absolutely continuous variational distributions for simplicity). 
Then it is clear that the expectations of the two random variables equal the rate and distortion terms, i.e.,
\[
\E[\mathcal{R}(Z,X)] = \E_{x \sim P_X}[ KL(Q_{Z|X=x} \|Q_Z)],
\]
and
\[
\E[\mathcal{D}(Z,X)] = \E_{P_X \Qzx}[\rho(X, \omega(Z))].
\]
Recall for any given decoder and variational distributions, the above rate and distortion terms define a point $(\E[\mathcal{D}] , \E[\mathcal{R}])$ that lies on an upper bound $R_U(D)$ of the source $R(D)$ curve. 
Given $m$ i.i.d. pairs of $(Z_1, X_1), (Z_2, X_2), ..., (Z_m, X_m) \sim P_X Q_{Z|X}$, we estimate such a point by $(\mu_{\mathcal{D}}, \mu_{\mathcal{R}})$, using the usual sample mean estimators $\mu_{\mathcal{R}} := \frac{1}{m} \sum_{j=1}^m \mathcal{R}(Z_j,X_j)$ and $\mu_{\mathcal{D}} := \frac{1}{m} \sum_{j=1}^m \mathcal{D}(Z_j,X_j)$.

\paragraph{Lower Bound Estimator}
For the proposed lower bound, given i.i.d. $X_1, ..., X_k \sim P_X$, define the random variable
\[
\xi := - \frac{1}{k} \sum_{i}^k \log u(X_i) - \frac{C_k}{\alpha} - \log \alpha + 1,
\]
where the various quantities are defined as in Sec.~\ref{sec:lb-method}.
Then it is clear that its expectation equals the lower bound objective Eq.~\ref{eq:lb_objective}, i.e.,  $\E[\xi] =  \tilde \ell_k := \mathbb{E}[- \log u(X)] - \mathbb{E}[C_k] / \alpha - \log\alpha + 1$. Given $m$ i.i.d. samples of $\xi$, we form the usual sample mean estimator $\mu_\xi := \frac{1}{m} \sum_j \xi_j$, and form an R-D lower bound estimate by
\[
\hat R_L (D) := -\lambda D + \mu_\xi.
\]
In computing  $\mu_\xi$, we fix the constant $\alpha$ to a sample mean estimate of $\E[C_k]$ computed separately (as in line 7 of Algorithm~\ref{algo:lb}). 
As explained in Sec.~\ref{app-sec:algo}, we repeat this procedure with different $\lambda$, and our overall R-D lower bound is obtained by taking the point-wise maximum of such linear under-estimators over $\lambda$.

\paragraph{Results}
Below we give measures of variability of for the sandwich bound estimates obtained on GAN generated images (Section \ref{sec:high-n-low-d-experiments}). We report detailed statistics in the following tables, and plot them in Figure \ref{fig:rd-variability}. Our results on other experiments have comparable variability.

For the upper bound, given each trained $\beta$-VAE, we compute $m$ samples of $(\mathcal{R}, \mathcal{D})$, essentially passing $m$ samples of $X$ through the autoencoder and collect the rate and distortion values; then we report the sample mean ($\mu$), sample standard deviation ($s$), and 95\% large-sample confidence interval for the population mean (the true expected value) for $\mathcal{R}$ and $\mathcal{D}$, respectively, in Tables \ref{tab:ubstats-gan-d=2} and \ref{tab:ubstats-gan-d=4}.

For the lower bound, given each trained $\log u$ model, we compute $m$ samples of $\xi$, then report the sample mean ($\mu_\xi$), sample standard deviation ($s_\xi$), and 90\% large-sample confidence lower bound for the population mean (the true expected value $\E[\xi] = \tilde \ell_k$)  in Tables \ref{tab:lbstats-gan-d=2} and \ref{tab:lbstats-gan-d=4}.

\begin{table}[]
    \centering
    \begin{tabular}{ccccccc}
    \toprule
     $\lambda$ &  $\mu_{\mathcal{R}}$ &  $s_{\mathcal{R}}$ & $\E[\mathcal{R}]$ CI &  $\mu_{\mathcal{D}}$ &  $s_{\mathcal{D}}$ &   $\E[\mathcal{D}]$ CI \\
    \midrule
           300 &                0.278 &                0.8 &       (0.15, 0.41) &                0.006 &            0.00287 &   (0.00553, 0.00647) \\
           500 &                 1.25 &               1.27 &       (1.04, 1.46) &              0.00351 &            0.00207 &   (0.00317, 0.00385) \\
         1e+03 &                 2.69 &               1.21 &       (2.49, 2.89) &              0.00157 &            0.00103 &    (0.0014, 0.00174) \\
         4e+03 &                 4.77 &               1.25 &       (4.56, 4.97) &             0.000352 &           0.000285 & (0.000305, 0.000399) \\
         8e+03 &                 5.65 &               1.31 &       (5.44, 5.87) &             0.000196 &           0.000152 & (0.000171, 0.000221) \\
       1.6e+04 &                 6.43 &               1.32 &       (6.21, 6.64) &             0.000126 &           9.68e-05 &  (0.00011, 0.000142) \\
    \bottomrule
    \end{tabular}
    \caption{Statistics for the estimated R-D upper bound on GAN-generated images with  $d=2$, computed with $m=100$ samples of $(\mathcal{R}, \mathcal{D})$. ``CI'' denotes 95\% large-sample confidence interval. }
    \label{tab:ubstats-gan-d=2}
    \vspace{5mm}
    \centering
    \begin{tabular}{ccccccc}
    \toprule
     $\lambda$ &  $\mu_{\mathcal{R}}$ &  $s_{\mathcal{R}}$ & $\E[\mathcal{R}]$ CI &  $\mu_{\mathcal{D}}$ &  $s_{\mathcal{D}}$ &   $\E[\mathcal{D}]$ CI  \\
    \midrule
           300 &                0.265 &              0.689 &       (0.15, 0.38) &              0.00978 &            0.00376 &    (0.00916, 0.0104) \\
           500 &                 1.44 &               1.24 &       (1.23, 1.64) &              0.00693 &            0.00284 &   (0.00646, 0.00739) \\
         1e+03 &                 4.08 &               1.73 &       (3.80, 4.36) &              0.00332 &            0.00152 &   (0.00307, 0.00357) \\
         2e+03 &                 6.59 &               1.88 &       (6.28, 6.90) &              0.00151 &           0.000752 &   (0.00139, 0.00164) \\
         4e+03 &                 8.66 &               1.89 &       (8.35, 8.97) &             0.000774 &           0.000389 &  (0.00071, 0.000838) \\
         8e+03 &                 10.4 &               1.89 &     (10.06, 10.68) &             0.000476 &           0.000229 & (0.000438, 0.000514) \\
       1.6e+04 &                   12 &               1.83 &     (11.69, 12.29) &             0.000323 &           0.000147 & (0.000299, 0.000347) \\
    \bottomrule
    \end{tabular}
    \caption{Statistics for the estimated R-D upper bound on GAN-generated images with  $d=4$, computed with $m=100$ samples of $(\mathcal{R}, \mathcal{D})$. ``CI'' denotes 95\% large-sample confidence interval. }
    \label{tab:ubstats-gan-d=4}
\end{table}

\twocolumn

\begin{table}[]
    \centering
\begin{tabular}{cccc}
\toprule
 $\lambda$ &  $\mu_\xi$ &  $s_\xi$ &  $\mathbb{E}[\xi]$ LCB \\
\midrule
     99.63 &       0.74 &     0.01 &                   0.74 \\
    199.90 &       1.46 &     0.01 &                   1.45 \\
    299.15 &       2.04 &     0.02 &                   2.03 \\
    499.69 &       2.78 &     0.02 &                   2.78 \\
    999.01 &       3.84 &     0.03 &                   3.83 \\
   1999.70 &       4.58 &     0.05 &                   4.57 \\
   2024.00 &       4.60 &     0.06 &                   4.58 \\
   2999.30 &       4.97 &     0.05 &                   4.95 \\
   3036.80 &       4.96 &     0.08 &                   4.93 \\
   4000.00 &       5.18 &     0.05 &                   5.17 \\
\bottomrule
\end{tabular}
    \caption{Statistics for the estimated R-D lower bound on $d=2$ GAN-generated images, computed with $m=30$ samples of $\xi$. ``LCB'' denotes 90\% large-sample lower confidence bound. }
    \label{tab:lbstats-gan-d=2}
\end{table}

\begin{table}[]
    \centering
    \begin{tabular}{cccc}
    \toprule
     $\lambda$ &  $\mu_\xi$ &  $s_\xi$ &  $\mathbb{E}[\xi]$ LCB \\
    \midrule
         99.63 &       1.09 &     0.01 &                   1.09 \\
        199.90 &       2.18 &     0.02 &                   2.17 \\
        299.15 &       3.08 &     0.02 &                   3.07 \\
        499.69 &       4.44 &     0.04 &                   4.43 \\
        999.01 &       5.97 &     0.06 &                   5.95 \\
       1999.70 &       6.69 &     0.08 &                   6.67 \\
       2024.00 &       6.69 &     0.07 &                   6.66 \\
       2999.30 &       6.87 &     0.08 &                   6.84 \\
       3036.80 &       6.85 &     0.09 &                   6.83 \\
       4000.00 &       6.93 &     0.08 &                   6.91 \\
    \bottomrule
    \end{tabular}
    \caption{Statistics for the estimated R-D lower bound on $d=4$ GAN-generated images, computed with $m=30$ samples of $\xi$. ``LCB'' denotes 90\% large-sample lower confidence bound. }
    \label{tab:lbstats-gan-d=4}
\end{table}
\begin{figure}[t]
    \centering
    \includegraphics[width=1\linewidth]{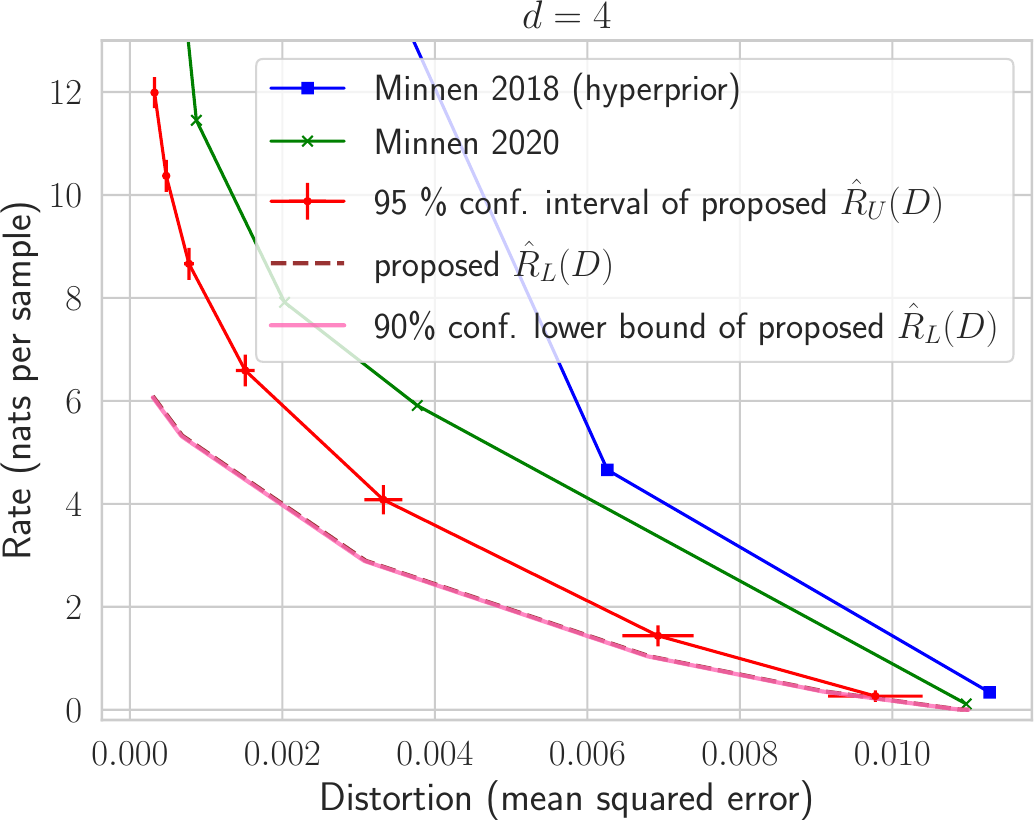}
    \caption{Zoomed-in version of Figure \ref{fig:gan-img-compression}-Middle-Bottom (GAN-generated images with intrinsic dimension $d=4$), showing the sample mean estimates of R-D upper bound points $(\mu_\mathcal{D}, \mu_\mathcal{R})$ bracketed by 95\% confidence intervals (\textbf{red}), as well as lower bound estimates based on the sample mean $\mu_\xi$ (\textbf{maroon}, dashed) and its 90\% confidence lower bound (\textbf{pink} line, beneath \textbf{maroon}).}
    \label{fig:rd-variability}
\end{figure}

\onecolumn

\end{document}